\newcommand{\eps}{\epsilon}
\newcommand{\depth}{\mathrm{depth}}
\newcommand{\poly}{\includegraphics{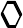}}
\title{\MakeUppercase{Odds-On Trees}}
\author{Prosenjit~Bose, 
        Luc~Devroye,
	Karim~Dou\"{\i}eb, 
	Vida~Dujmovi\'c, 
	James~King, and 
	Pat~Morin}
\begin{document}
\maketitle

\begin{abstract}
  Let $\mathcal{P}:\R^d\rightarrow A$ be a query problem over $\R^d$
  for which there exists a data structure $\mathcal{S}$ that can compute
  $\mathcal{P}(q)$ in $O(\log n)$ time for any query point $q\in\R^d$.
  Let $D$ be a probability measure over $\R^d$ representing a distribution
  of queries.  We describe a data structure $T=T_{\mathcal{P},D}$, called
  the \emph{odds-on tree}, of size $O(n^\eps)$ that can be used as a
  filter that quickly computes $\mathcal{P}(q)$ for some query values in
  $\R^d$ and relies on $\mathcal{S}$ for the remaining queries.  With an
  odds-on tree, the expected query time for a point drawn according to
  $D$ is $O(H^*+1)$, where $H^*$ is a lower-bound on the expected cost
  of any linear decision tree that solves $\mathcal{P}$.

  Odds-on trees have a number of applications, including
  distribution-sensitive data structures for point location in 2-d,
  point-in-polytope testing in $d$ dimensions, ray shooting in simple
  polygons, ray shooting in polytopes, nearest-neighbour queries in
  $\R^d$, point-location in arrangements of hyperplanes in $\R^d$,
  and many other geometric searching problems that can be solved in the
  linear-decision tree model.  A standard lifting technique extends these
  results to algebraic decision trees of constant degree.  A slightly
  different version of odds-on trees yields similar results for orthogonal
  searching problems that can be solved in the comparison tree model.
\end{abstract}

\section{Introduction}

Geometric search problems have a long and rich history
\cite{ae99,g00chapter,ms05}. In these problems, we are typically
given a collection $S$ of $n$ geometric objects in $\R^d$ and asked
to store them in a data structure so that we can efficiently answer
queries about these objects. Until quite recently, the performance of
these data structures was typically measured, at least theoretically,
in terms of the worst-case (over all possible choices of $S$ and $q$)
query time as a function of $n$.  A worst-case query time of $O(\log
n)$ is typically viewed as the gold standard for such search problems,
at least in models of computation that only allow binary decisions.

Somewhat more recently, researchers have begun studying geometric
search problems under the lens of \emph{distribution-sensitivity}. In
this setting, one assumes that there is a probability measure $D$ over
the set of possible queries, and one attempts to optimize the expected
query time when queries are distributed according to $D$.  For example,
given a planar triangulation $G$ and a distribution $D$ over $\R^2$,
one can construct an $O(n)$ sized data structure that can determine
the face of $G$ that contains any query point $q\in\R^2$.  The expected
query time of this structure is $O(H+1)$, where
\begin{equation}
    H = \sum_{i} p_i\log (1/p_i)  \enspace  \eqlabel{entropy-faces}
\end{equation}
and $p_i$ denotes the probability that $q$ is contained in the $i$th face
of $G$ \cite{acmr00,amm00,amm01a,amm01b,ammw07,i01,i04}.  Information
theory tells us that this result is optimal in any \emph{dichotomous}
model of computation where any execution branching has at most 2 possible
outcomes.

More recently, Collette \etal\ \cite{cdilm08,cdilm09} have shown that a
similar result holds for point location in any simple connected planar
subdivision $G$.  Note that this involves more than simply triangulating
$G$ and applying the results for triangulations.  Triangulating $G$
increases the number of faces and therefore also increases the value of
$H$ in \eqref{entropy-faces}.  On the other hand, we can not expect to
always achieve a query time of $O(H+1)$ since this would imply, for
example, an $O(1)$ query-time data structure for testing if a point
is contained in a simple polygon.  The result of Collette \etal\  is
a data structure whose expected query time is $O(H^*+1)$ where $H^*$
is a lower-bound on the expected cost of any linear decision tree for
point location in $G$ with queries distributed according to $D$.

Dujmovi\'c \etal\ \cite{dhm09} study distribution-sensitive 2-sided
2-dimensional orthogonal range counting.  They describe a data
structure, called a biased range tree, that preprocesses an $n$ point
set $S\subset\R^2$ and a query distribution $D$ over $\R^2$.  A biased
range tree uses $O(n\log n)$ preprocessing time and space and can
answer 2-sided 2-dimensional orthogonal range counting queries over
$S$ in expected time $O(H^*+1)$, where $H^*$ is a lower-bound on the
expected cost of any comparison tree for 2-sided range queries over $S$
with queries distributed according to $D$.

The results of Collette \etal\ \cite{cdilm08,cdilm09} and Dujmovi\'c
\etal\ \cite{dhm09} use similar techniques to prove their optimality,
but each requires their own \textit{ad hoc} arguments.  For example, the point
location results are achieved by finding near-minimum-entropy Steiner
triangulations and then applying existing distribution-sensitive results
for point location in triangulations.  Biased range trees on the other
hand, mix $k$-d trees, fractional cascading, and biased binary search
trees to achieve their running time.

In the current paper, we describe a general and low-overhead method
of taking any $O(\log n)$ query-time data structure and making it
distribution-sensitive.  Let $\mathcal{P}:\R^d\rightarrow A$ be a query
problem over $\R^d$ for which there exists a data structure $\mathcal{S}$
that can compute $\mathcal{P}(q)$ in $O(\log n)$ time for any query
point $q\in\R^d$.\footnote{There is no natural definition of $n$ for
the abstract problem $\mathcal{P}$.  Nevertheless, all the problems
we will eventually consider arise from a set of $n$ objects in $\R^d$
and, for all these problems $O(\log n)$ is the optimal worst-case
query time in the models of computation we will consider.}  Let $D$
be a probability measure over $\R^d$ representing a distribution
of queries.  We describe a data structure $T=T_{\mathcal{P},D}$,
called the \emph{odds-on tree},\footnote{\textbf{odds-on} $\cdot$
\textit{adj.}\ having a better than even chance of success ``the odds-on
favourite''}  of size $O(n^\eps)$ that can be used as a filter that
quickly computes $\mathcal{P}(q)$ for some query values in $\R^d$ and
relies on $\mathcal{S}$ for the remaining queries.  With an odds-on tree,
the expected query time for a point drawn according to $D$ is $O(H^*+1)$,
where $H^*$ is a lower-bound on the expected cost of any linear decision
tree that solves $\mathcal{P}$.

For any constant integer $p\ge 1$, a standard lifting technique allows us
to lift queries from $\R^d$ into $\R^{d'}$, with $d'={p+d \choose p}-1$
so that there is a correspondence between $d'$-variate linear inequalities
(halfspaces in $\R^{d'}$) and $d$-variate polynomial inequalities of
maximum degree $p$ \cite{yy85}.  This technique yields the same result,
except that $H^*$ becomes a lower-bound on the expected cost of any
degree $p$ algebraic decision tree that solves $\mathcal{P}$.

Odds-on trees have a plethora of applications, including
distribution-sensitive data structures for point location in 2-d,
point-in-polytope testing in $d$ dimensions, ray shooting in simple
polygons, ray shooting in polytopes, nearest-neighbour queries in $\R^d$,
point-location in arrangements of hyperplanes in $\R^d$, and many other
geometric searching problems that can be solved in the linear-decision
tree model.  Furthermore, a variant of the odds-on tree that works in
the comparison tree model provides distribution-sensitive data structures
for orthogonal searching problems in the comparison tree model.

The remainder of this paper is organized as follows:
\Secref{prelim} presents some preliminary definitions and background
material. \Secref{data-structure} presents the odds-on tree and
algorithms for constructing it.  \Secref{analysis} proves that the
odds-on tree matches the query time of any linear decision tree.
\Secref{applications} presents some of the geometric applications of this
data structure. Finally, \secref{conclusions} summarizes and concludes
with directions for future work.

\section{Preliminaries}
\seclabel{prelim}

Throughout this paper, the underlying dimension, $d$, is a constant, and
other constants defined in the paper may (implicitly) depend on $d$.
A \emph{simplex} in $\R^d$ is the common intersection of a set of at most
$d+1$ closed halfspaces in $\R^d$. Note that, under this definition,
simplices need not be bounded and $\R^d$, as well as $\emptyset$, are
both simplices.

Throughout this paper, we assume an underlying probability measure
$D$ over $\R^d$.  All expectations and probabilities are (implicitly)
with respect to $D$.  For any subset $X\subseteq\R^d$, $\Pr(X)$ refers
to $D(X)$.  We use the notation $D_{|X}$ to denote the distribution $D$
conditioned on $X$, i.e., $D_{|X}(Y)=\Pr(Y\mid X)=\Pr(X\cap Y)/\Pr(X)$
for all $Y\subseteq\R^d$.  If $F=\{X_1,\ldots,X_k\}$ are non-overlapping
subsets of $\R^d$ then the \emph{entropy} of $F$, denoted $H(F)$ is
\[
    H(F) = \sum_{i=1}^k \Pr(X_i|{\cup F})\log(1/\Pr(X_i|{\cup F})) \enspace ,
\]
where for any set $S$, $\cup S$ denotes $\bigcup_{s\in S} s$.
The probability measure $D$ is used as an input to our algorithms.
We assume that the algorithm has access to $D$ through a \emph{Sampling
Oracle} that allows us to draw a random sample $q\in\R^d$ from the
distribution $D$.

A \emph{query problem} over $\R^d$ is a function
$\mathcal{P}:\R^d\rightarrow A$ where $A$ is some set of \emph{answers}.
We assume the existence of two oracles that allow access to $\mathcal{P}$.
The \emph{Backup Oracle} allows us to compute $\mathcal{P}(q)$ for
any query point $q$, but requires $O(\log n)$ time to do so.  The
\emph{Interference Oracle} allows us to test, for any simplex $\Delta$,
if there exists $p,q\in\Delta$ with $\mathcal{P}(p)\neq\mathcal{P}(q)$.
The running time of the Interference Oracle will be unspecified.

The oracles are used in the following ways:  The Backup Oracle is used to
answer any queries that can not be answered directly by the odds-on tree.
The Sampling Oracle and the Interference Oracle are used only during
the construction of the odds-on tree. The running times in theorems in
Sections~\ref{sec:data-structure} and \ref{sec:analysis} all specify
the number of invocations of the Sampling and Interference Oracle
used. When discussing applications in \secref{applications}, efficient
implementations of the Interference Oracle for specific problems will
be described.

A \emph{decision tree} for $\mathcal{P}$ is a rooted ordered binary
tree in which each internal node $v$ is labelled with a function
$v_f:\R^d\rightarrow\{0,1\}$ and each leaf $w$ is labelled with an element
$w_A\in A$.  A query point $q\in\R^d$ follows a root-to-leaf path,
proceeding to the right child of $v$ if $v_f(q)=1$ and the left
child of $v$ if $v_f(q)=0$.  For a decision tree $T$ and a point
$q\in\R^d$, we denote by $T(q)$ the label of the leaf on the root-to-leaf
path for $p$ in $T$.  A decision tree \emph{solves} $\mathcal{P}$ if
$T(q)=\mathcal{P}(q)$ for all $q\in\R^d$. The \emph{depth} of a node $v$
in a rooted tree $T$, denoted $\depth_T(v)$, is the number of edges
on the path from $v$ to the root of $T$.  The \emph{(expected) cost}
of a decision tree, denoted $\mu_D(T)$, is the expected depth of the
leaf reached when $p$ is drawn according to the probability measure $D$.

Decision trees are classified based on the types of functions, $v_f$,
used at their nodes.  In a \emph{linear decision tree}, each $v_f$ is
a linear inequality.  In a \emph{degree $p$ algebraic decision tree},
each $v_f$ is a $d$-variate polynomial inequality of degree $p$.  In a
\emph{comparison tree}, $v_f$ is a simple comparison that compares one
coordinate of a query point $q$ to some value.

Entropy, query problems, and decision trees are all related by (half of)
Shannon's Source Coding Theorem \cite{s48}:

\begin{thm}[Shannon 1948]\thmlabel{shannon}
  Let $\mathcal{P}:\R^d\rightarrow A$ be a query problem and define
  $\mathcal{P}^{-1}(a)=\{q\in\R^d:\mathcal{P}(q)=a\}$, for any $a\in
  A$. Then, for any decision tree $T$ that solves $\mathcal{P}$,
  \[
    \mu_D(T) \ge H(\{\mathcal{P}^{-1}(a): a\in A\}) \enspace .
  \]
\end{thm}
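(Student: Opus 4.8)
The statement is the classical source coding lower bound, so the plan is to reduce the cost of a decision tree to the expected codeword length of a prefix code and invoke the standard information-theoretic inequality.

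\medskip

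\noindent\textbf{Proof proposal.}
The plan is to observe that a decision tree $T$ that solves $\mathcal{P}$ induces a partition of $\R^d$ coarser than (or equal to) the partition $\{\mathcal{P}^{-1}(a):a\in A\}$, and then to compare $\mu_D(T)$ with the entropy of the latter partition via Kraft's inequality. First I would, for each leaf $w$ of $T$, let $R_w\subseteq\R^d$ denote the set of query points whose root-to-leaf path ends at $w$; since $T$ is a binary tree, the sets $\{R_w : w\text{ a leaf}\}$ are pairwise disjoint and cover $\R^d$, and $\depth_T(w)$ is exactly the length of the binary string encoding the path to $w$. Because $T$ solves $\mathcal{P}$, every point in a given $R_w$ has the same answer $w_A$, so $R_w\subseteq\mathcal{P}^{-1}(w_A)$; hence the leaf partition refines $\{\mathcal{P}^{-1}(a):a\in A\}$.

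\medskip

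Next I would set $q_w=\Pr(R_w)$ and $\ell_w=\depth_T(w)$, so that $\mu_D(T)=\sum_w q_w\ell_w$. The depths $\ell_w$ are the lengths of the codewords of a binary prefix code (the root-to-leaf paths of a binary tree), so they satisfy Kraft's inequality $\sum_w 2^{-\ell_w}\le 1$. The standard information-theoretic argument — Gibbs' inequality applied to the probabilities $q_w$ and the sub-probabilities $2^{-\ell_w}/\sum_{w'}2^{-\ell_{w'}}$, or equivalently the nonnegativity of the relative entropy — then gives
\[
  \sum_w q_w\ell_w \;\ge\; \sum_w q_w\log(1/q_w) \;=\; H(\{R_w : w\text{ a leaf}\}) \enspace .
\]
Finally, I would argue that grouping leaves according to their common answer only decreases entropy: if $a\in A$ and $\mathcal{P}^{-1}(a)=\bigcup_{w : w_A=a} R_w$ then, writing $p_a=\Pr(\mathcal{P}^{-1}(a))=\sum_{w:w_A=a} q_w$, concavity of $t\mapsto t\log(1/t)$ (equivalently, the grouping/log-sum inequality) yields $\sum_{w:w_A=a} q_w\log(1/q_w)\ge p_a\log(1/p_a)$, and summing over $a$ gives $H(\{R_w\})\ge H(\{\mathcal{P}^{-1}(a):a\in A\})$. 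Chaining the two inequalities proves the theorem. (Strictly speaking, $\R^d$ has measure one under $D$, so no normalization by $\Pr(\cup F)$ is needed here; the conditional form in the definition of $H$ reduces to the plain form.)

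\medskip

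\noindent\textbf{Main obstacle.}
The mathematics is entirely standard; the only real care needed is measure-theoretic bookkeeping. A decision tree may have infinitely many leaves, and the functions $v_f$ need not be nice, so one should note that each $R_w$ is obtained from countably many applications of the sets $\{x:v_f(x)=1\}$ and their complements and argue that these are $D$-measurable (or simply restrict attention, without loss of generality, to the measurable case, since the paper's decision trees are linear/algebraic/comparison trees whose node functions define measurable sets). With countably many leaves, Kraft's inequality and Gibbs' inequality both extend to countable sums, and the sum defining $\mu_D(T)$ is a sum of nonnegative terms, so it is well-defined in $[0,\infty]$ and the inequality holds trivially when it diverges. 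Apart from this, the proof is a direct invocation of Shannon's source coding bound, which is why the statement is attributed to Shannon rather than proved in detail.
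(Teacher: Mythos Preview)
The paper does not supply a proof of this theorem at all; it simply states it as a classical fact and attributes it to Shannon \cite{s48}. Your argument is a correct and standard derivation of the source-coding lower bound: pass from the decision tree to a prefix code, apply Kraft together with Gibbs' inequality to get $\mu_D(T)\ge H(\{R_w\})$, and then use the grouping property of entropy to descend from the leaf partition to the coarser answer partition $\{\mathcal{P}^{-1}(a)\}$. One small quibble: the step ``concavity of $t\mapsto t\log(1/t)$'' is phrased loosely; what you actually need is that $t\mapsto t\log t$ is convex with value $0$ at $0$, hence superadditive, which gives $\sum_{w:w_A=a} q_w\log q_w \le p_a\log p_a$ --- or, as you also say, just invoke the chain rule/grouping axiom for entropy. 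Your remarks about measurability and countably many leaves are appropriate caveats, and indeed the paper only ever uses this bound for linear, algebraic, or comparison trees, where measurability is immediate.
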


\section{The Data Structure}
\seclabel{data-structure}

In this section we describe a data structure $T_{\mathcal{P},D}$ called
the \emph{odds-on tree} that, in conjunction with a Backup Oracle, yields
a data structure that solves $\mathcal{P}$ and has expected query time
that is within a constant factor of the expected cost of any linear
decision tree $T^*$ for $\mathcal{P}$.   For a reader familiar with
Matou\v{s}ek's efficient partition trees \cite{m92},  the executive
summary of this section is as follows: An odds-on tree is essentially
a partition tree on a sample of $O(n^\epsilon)$ points drawn according
to the distribution $D$.  A complete description follows.

\begin{thm}[Matou\v{s}ek 1992]\thmlabel{point-partition}
There exists a constant $c$ such that, for any set $S$ of $m$
points in $\R^d$ and any constant $r\le m$, there exists a sequence
$\langle \Delta_1,\ldots,\Delta_r\rangle$ of closed simplices such that
$\bigcup_{i=1}^r \Delta_i = \R^d$,
  \begin{enumerate}
    \item $|\Delta_i^*\cap S| \le 2m/r$, where $\Delta_i^*=\Delta_i\setminus
    \left(\bigcup_{j=1}^{i-1}\Delta_j\right)$, and
    \item For any hyperplane $\ell$, there are at most $cr^{1-1/d}$ elements of
  $\{\Delta_1,\ldots,\Delta_r\}$ whose interiors intersect $\ell$.
  \end{enumerate}
  The sequence of simplices $\Delta_1,\ldots,\Delta_r$ can be computed
  in $O(m)$ time.
\end{thm}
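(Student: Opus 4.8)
The plan is to construct the simplices greedily, one at a time, maintaining the set $P$ of points of $S$ not yet covered by the simplices already chosen (initially $P=S$). For $i=1,\dots,r-1$ we choose $\Delta_i$ to contain between $\lceil m/r\rceil$ and $2m/r$ points of the current $P$ (the window is non-empty since $r\le m$), padding with the empty simplex once $P$ becomes empty, and we set $\Delta_r=\R^d$. Because the earlier simplices cover exactly $S\setminus P$, we have $\Delta_i^*\cap S=\Delta_i\cap P$, so each of the first $r-1$ classes has at most $2m/r$ points; and since each of those steps removes at least $m/r$ points of $S$ while $P\neq\emptyset$, at most $m/r$ points remain when we reach step $r$, so $|\Delta_r^*\cap S|\le m/r$ as well. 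This establishes Property~1, and $\bigcup_i\Delta_i=\R^d$ holds trivially; moreover $\Delta_r=\R^d$ adds only $1$, and the empty padding simplices add $0$, to the crossing count of every hyperplane, which is absorbed into $c$. Everything therefore reduces to (i) showing a suitable $\Delta_i$ always exists, and (ii) the global crossing bound of Property~2.

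For (i) --- the one-step lemma --- the key point is not merely that a simplex with the right number of points of $P$ exists, but that it can be taken from a small pool: all simplices we ever produce can be assumed to have their facets in an arrangement $\mathcal{A}$ of only $\mathrm{poly}(r)=O(1)$ hyperplanes, namely those spanned by $d$-tuples of a constant-size random sample of $S$, enlarged if needed by the $O(1)$ facets produced during the construction, all of which is computable in $O(m)$ time using the known linear-time constructions of $\epsilon$-approximations and cuttings for point sets; a sampling argument then shows that among the $O(1)$ candidate simplices built from $\mathcal{A}$ one always lands in the required point-count window. The payoff of restricting to such simplices is that only $O(1)$ combinatorially distinct simplices, and only $O(1)$ ``crossing types'' of query hyperplane (two hyperplanes having the same type if they cross exactly the same candidate simplices), can ever arise; hence it suffices to enforce Property~2 for a test set $\mathcal{T}$ consisting of one hyperplane per type, with $|\mathcal{T}|=O(1)$.

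Part (ii) is the heart of the proof, because it is a global property that has to be secured through purely local choices. The plan is a weighting argument: maintain, for each $h\in\mathcal{T}$, the number $x_i(h)$ of chosen simplices that $h$ crosses, and at step $i$ pick --- among the candidates meeting the point-count window --- the simplex minimizing the increase of the potential $\Phi_i=\sum_{h\in\mathcal{T}}\beta^{x_i(h)}$ for a suitable constant $\beta>1$. The averaging step invokes the classical fact that a single hyperplane crosses only $O(\rho^{1-1/d})$ cells of a $(1/\rho)$-cutting (the same phenomenon Property~2 asserts, but for the fixed arrangement $\mathcal{A}$) to conclude that the average, over candidate simplices, of the number of test hyperplanes they cross is $O(|\mathcal{T}|\cdot r^{-1/d})$; hence the minimizing choice raises $\Phi$ by at most a factor $1+O(r^{-1/d})$, so after all $r$ steps $\Phi\le|\mathcal{T}|\cdot e^{O(r^{1-1/d})}$, which forces $x(h)=O(r^{1-1/d})$ for every $h\in\mathcal{T}$ and hence for every hyperplane. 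I expect this to be the main obstacle: making the averaging over candidates actually produce the exponent $1-1/d$ rather than the trivial $1$ is exactly where the geometry of cuttings, and not mere counting, is needed, and the slack between $m/r$ and $2m/r$ in Property~1 is precisely the room one needs to always have enough well-spread candidates to average over. Finally the running time is $O(m)$, with the hidden constant depending on $r$ and $d$: the sample, the arrangement $\mathcal{A}$, and the candidate pool are built in $O(m)$ time, and each of the $O(1)$ steps spends $O(m)$ time distributing the points of $P$ among the $O(1)$ cells of $\mathcal{A}$ plus $O(1)$ time updating $\Phi$.
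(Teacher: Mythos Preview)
The paper does not actually prove this theorem: it is stated as a result of Matou\v{s}ek~1992 and cited, with only the one-sentence remark that Condition~1 (the bound on $|\Delta_i^*\cap S|$) is not in Matou\v{s}ek's original formulation but follows immediately from the fact that his construction of $\Delta_1,\ldots,\Delta_r$ is incremental. So there is no ``paper's own proof'' to compare against; from the paper's point of view the whole theorem is a black box.

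That said, your sketch is in the spirit of Matou\v{s}ek's actual argument --- an incremental construction driven by a multiplicative-weights potential $\Phi=\sum_h \beta^{x(h)}$ over a finite test set of hyperplanes --- and your observation that the incremental picking is exactly what yields Condition~1 matches the paper's one remark. The main soft spot is step~(i): you fix, once and for all, an arrangement $\mathcal{A}$ built from a sample of $S$ and draw all candidate simplices from it. In Matou\v{s}ek's proof the candidate simplices at each step are the cells of a $(1/r')$-cutting of the \emph{currently weighted} test hyperplanes, recomputed after each reweighting; this is what makes the averaging in step~(ii) work, because the cutting lemma guarantees that every test hyperplane (with its current weight) crosses only an $O(r'^{1-1/d})$-fraction of the cells. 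With a single static arrangement chosen in advance from a sample of the \emph{points}, you have no such guarantee against the reweighted hyperplanes, and the ``average candidate is crossed by $O(|\mathcal{T}|\,r^{-1/d})$ test hyperplanes'' step does not obviously go through. This is precisely the place you flagged as the main obstacle, and it is where the argument as written would need to be repaired.
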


Note that Condition~1 of \thmref{point-partition} is not in the original
statement of the theorem, but follows from Matou\v{s}ek's incremental
construction of $\Delta_1,\ldots,\Delta_r$ \cite{m92}.

Let $S$ be a set of $m$ points in $\R^d$. Then the \emph{partition
tree} $T_S$ for $S$ is a rooted ordered tree obtained by recursively
applying \thmref{point-partition}.  The root of $T_S$ has $r$ children
corresponding to the simplices $\Delta_1,\ldots,\Delta_{r}$ obtained
by applying \thmref{point-partition} to $S$. The $i$th child of the
root is itself the root of the partition tree $T_{S\cap\Delta_i^*}$
for $S\cap\Delta_i^*$. This recursive process stops when the set $S$
contains at most 1 point or when the depth exceeds some pre-specified
maximum depth $k$.

Next we define some regions, $\Delta(v)$, $\poly(v)$, and $\Xi(v)$,
that are associated with each node $v$ of $T_S$.  Every node $v$ in
$T_S$, except the root of $T_S$, is naturally associated with a simplex
$\Delta(v)$ that was obtained from \thmref{point-partition} and that
generated $v$.  For the root of $T_S$, we define $\Delta(v)=\R^d$.

For a node $v$ of $T$ whose ancestors are $v_1,\ldots,v_i$
we define $\poly(v)=\Delta(v)\cap\bigcap_{j=1}^i \Delta(v_i)$.  Note that
$\poly(v)\subseteq\Delta(v)$, and that $\poly(v)$ is a convex polytope
that has $O(((d+1)(i+1))^{\floor{d/2}})=O(i^{\floor{d/2}})$ vertices since
it is the intersection of at most $(d+1)(i+1)$ halfspaces \cite{m70}.

For a point $q\in\R^d$, the \emph{search path} for $q$ in $T_S$ starts
at the root and proceeds to the first child $i$ such that $q\in\Delta_i$
(note that this implies $q\in\Delta_i^*$) and this process is applied
recursively until reaching a leaf of $T_S$.  In this way, for every node
$v$ of the partition tree there is a maximal subset $\Xi(v)\subseteq
\R^d$ such that the search path for every point $q\in\Xi(v)$ contains $v$.
Note that $\Xi(v)\subseteq \poly(v)\subseteq \Delta(v)$, but that $\Xi(v)$
is not necessarily convex or even connected.

We extend the definitions of $\Xi$, $\poly$, and $\Delta$ to sets of
nodes in $T$ in the natural way; if $V$ is a set of nodes in $T$,
then $\Xi(V)=\{\Xi(v):v\in V\}$, $\poly(V)=\{\poly(v):v\in V\}$, and
$\Delta(V)=\{\Delta(v):v\in V\}$.

The following theorem summarizes the properties of the partition tree
$T_{S}$ \cite{m92} (each property is inherited from the corresponding
property of the simplicial partition in
\thmref{point-partition}):

\begin{thm}\thmlabel{point-partition-tree}
  Let $S$ be a set of $m$ points in $\R^d$, let $T_S$ denote the partition
  tree described above, and let $V_i$ denote the set of at most $r^i$
  nodes of $T_S$ at depth $i$.  There exists a constant $c$, independent
  of $r$ and $m$, such that the partition tree $T_S$ has the following
  properties, for every $i\in\Z$:
  \begin{enumerate}
    \item For every node $v\in V_i$, $|S\cap\Xi(v)| \le m(2/r)^i$, and 
    \item For any hyperplane $\ell$, the number of elements in
      $\poly(V_i)$ whose interiors intersect $\ell$ is at most
      $(cr^{1-1/d})^i$.
    \end{enumerate}
    The partition tree $T_S$ can be constructed from $S$ in
    $O(m\log m)$ time.
\end{thm}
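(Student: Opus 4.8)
The plan is to verify each of the two numbered properties by induction on the depth $i$, using the corresponding property of the single-level simplicial partition of \thmref{point-partition} as the base case, and then to account for the construction time. Throughout I will use the observation that a node $v\in V_i$ with parent $u\in V_{i-1}$ has $S\cap\Xi(v)\subseteq S\cap\Xi(u)$, so the point set shrinks along any root-to-leaf path, and that the partition tree at $v$ is exactly the recursively-built partition tree on the set $S\cap\Delta(v)^*$, so \thmref{point-partition} applies at every node with parameter $r$ (the same $r$ is used at all levels).

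For Property~1, the base case $i=0$ is immediate since $\Xi(\text{root})=\R^d$ and $|S\cap\R^d|=m=m(2/r)^0$. For the inductive step, let $v\in V_i$ have parent $u\in V_{i-1}$, and let $v$ be the $j$th child of $u$, so $\Delta(v)=\Delta_j$ in the partition of the point set $S_u:=S\cap\Xi(u)$ produced by \thmref{point-partition}. Since the search path of any $q\in\Xi(v)$ passes through $u$ and then selects child $j$, we have $q\in\Delta_j^*$; hence $\Xi(v)\subseteq\Xi(u)\cap\Delta_j^*$ and therefore $S\cap\Xi(v)\subseteq S_u\cap\Delta_j^*$, which by Condition~1 of \thmref{point-partition} has size at most $2|S_u|/r$. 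Combined with the inductive hypothesis $|S_u|\le m(2/r)^{i-1}$, this gives $|S\cap\Xi(v)|\le m(2/r)^i$, as required.

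For Property~2, fix a hyperplane $\ell$. I will bound the number of nodes $v\in V_i$ such that the interior of $\poly(v)$ meets $\ell$. The key geometric fact is that $\poly(v)\subseteq\Delta(v)$, so if the interior of $\poly(v)$ intersects $\ell$ then so does the interior of $\Delta(v)$; moreover $\poly(v)\subseteq\poly(u)$ for the parent $u$ of $v$, so the interior of $\poly(u)$ also meets $\ell$. Thus a level-$i$ node whose $\poly$ is cut by $\ell$ must have a parent at level $i-1$ whose $\poly$ is cut by $\ell$, and among the $r$ children of any such parent $u$, the ones whose $\Delta$-interior is cut by $\ell$ number at most $cr^{1-1/d}$ by Condition~2 of \thmref{point-partition}. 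So if $N_i$ denotes the number of level-$i$ nodes with $\poly$-interior meeting $\ell$, we get the recurrence $N_i\le cr^{1-1/d}\cdot N_{i-1}$ with $N_0\le 1$ (the root, whose $\poly$ is $\R^d$), yielding $N_i\le (cr^{1-1/d})^i$.

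For the running time, each application of \thmref{point-partition} to a set of $m'$ points costs $O(m')$ time; since the sets $S\cap\Xi(v)$ over all $v$ at a fixed depth $i$ are disjoint subsets of $S$, the total work at each level is $O(m)$, and summing over the $O(\log_r m)=O(\log m)$ levels gives $O(m\log m)$. The main obstacle is getting Property~2 exactly right: one must be careful that the quantity being propagated down the tree is ``$\poly(v)$-interior meets $\ell$,'' not ``$\Delta(v)$-interior meets $\ell$,'' since $\Delta(v)$ is a simplex in ambient $\R^d$ (not restricted to $\poly(u)$) and the counting in \thmref{point-partition} is over the $r$ sibling simplices — the containment $\poly(v)\subseteq\poly(u)\cap\Delta(v)$ is exactly what bridges the recursive step to the single-level bound, and verifying that the interiors behave correctly under this containment (rather than some spurious blow-up) is the delicate point.
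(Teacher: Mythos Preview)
Your proof is correct and follows exactly the approach the paper indicates: the paper does not give a standalone proof of this theorem but simply remarks that ``each property is inherited from the corresponding property of the simplicial partition in \thmref{point-partition},'' which is precisely the level-by-level induction you carry out. Your careful handling of the containment $\poly(v)\subseteq\poly(u)\cap\Delta(v)$ in Property~2 and the per-level $O(m)$ work bound for the running time are the right way to make this inheritance explicit.
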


The following is a sampling version of \thmref{point-partition-tree} that we
will use in the construction of an odds-on tree:

\begin{thm}\thmlabel{prob-partition-tree} 
  Let $S$ be a sample of $m$ points in $\R^d$ i.i.d.\ according
  to $D$, let $T_D=T_S$ denote the partition tree
  given by \thmref{point-partition-tree}, and let $V_i$ denote the set
  of at most $r^i$ nodes of $T_D$ at depth $i$.  There exists a constant $c$,
  independent of $r$, $m$ and $D$, such that with probability at least
  $1-O(e^{-m^{1/2}})$, $T_D$ has the following properties, for every
  $i\in\{0,\ldots,\lfloor (1/4)\log_rm\rfloor\}$:
  \begin{enumerate}
    \item For every node $v\in V_i$, $\Pr(\Xi(v)) \le (3/r)^i$,
         and
    \item For any hyperplane $\ell$, the number of elements in
      $\poly(V_i)$ whose interiors intersect $\ell$ is at most
      $(cr^{1-1/d})^i$.
  \end{enumerate}
  The sample partition tree $T_D$ can be constructed in $O(m\log m)$
  time plus the cost of $O(m)$ calls to the Sampling Oracle.
\end{thm}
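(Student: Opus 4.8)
The plan is to obtain \thmref{prob-partition-tree} from the deterministic \thmref{point-partition-tree} applied to the random sample $S$, by replacing the combinatorial guarantees of Condition~1 with probabilistic ones via a union bound over the (few) nodes of small depth. Property~2 is purely combinatorial and does not depend on $D$ at all: it is inherited verbatim from \thmref{point-partition-tree}, so it holds with probability $1$ and needs no further argument. The real work is entirely in Property~1 and in the depth restriction $i\le\lfloor(1/4)\log_r m\rfloor$, which is exactly what makes the nodes at that depth still carry enough sample points to make a concentration argument work.

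First I would set up the bridge between the empirical count $|S\cap\Xi(v)|$ and the true probability $\Pr(\Xi(v))$. Fix a depth $i\le\lfloor(1/4)\log_r m\rfloor$ and a node $v\in V_i$. By Condition~1 of \thmref{point-partition-tree}, $|S\cap\Xi(v)|\le m(2/r)^i$ deterministically. The key point is that $\Xi(v)$ is determined by the partition tree, which in turn is determined by $S$; nevertheless, because each region $\Xi(v)$ at depth $i$ is cut out by at most $(d+1)(i+1)$ halfspaces whose defining hyperplanes pass through points of $S$, the family of all possible sets $\Xi(v)$ that could arise ranges over a class of bounded VC dimension (a function of $d$ only, growing polynomially in $i$, hence in $\log m$). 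So I would invoke a relative ($\eps$-sample / Vapnik--Chervonenkis) deviation inequality: with probability $1-O(e^{-m^{1/2}})$, simultaneously for every set $R$ in this bounded-VC class, $\Pr(R)\le \tfrac{1}{m}|S\cap R| + O\!\big(\sqrt{\Pr(R)\cdot m^{-1}\cdot\mathrm{polylog}\,m}\big) + O(m^{-1}\,\mathrm{polylog}\,m)$. Feeding in $|S\cap\Xi(v)|\le m(2/r)^i$ gives $\Pr(\Xi(v))\le (2/r)^i + (\text{lower-order terms})$, and the depth bound $i\le\tfrac14\log_r m$ guarantees $(2/r)^i\ge m^{-1/4}/\!\cdots$, so the lower-order correction terms (which scale like $m^{-1/2}$ up to polylog factors) are dominated and the whole bound is at most $(3/r)^i$ once $m$ is large enough. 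I would carry out this last inequality $(2/r)^i+\text{err}\le(3/r)^i$ carefully, since it is where the constant $3$ (as opposed to $2$) is spent.

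The construction-time claim is immediate: drawing $S$ costs $O(m)$ calls to the Sampling Oracle, and then \thmref{point-partition-tree} builds $T_S=T_D$ from $S$ in $O(m\log m)$ time; no oracle other than the Sampling Oracle is touched during construction.

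The main obstacle will be the deviation inequality and, in particular, making precise the claim that the regions $\Xi(v)$ at depth $i\le\tfrac14\log_r m$ form (or are contained in) a range space of VC dimension bounded by a function of $d$ and $i$ (equivalently $d$ and $\log m$), together with tracking the polylogarithmic factors carefully enough that they are swallowed by the slack between $(2/r)^i$ and $(3/r)^i$ for all $i$ in the allowed range simultaneously. Once that uniform-convergence statement is in hand, the union bound over the at most $\sum_{i}r^i=O(r^{(1/4)\log_r m})=O(m^{1/4})$ relevant nodes is routine and is absorbed into the $O(e^{-m^{1/2}})$ failure probability (which is what forces the $1/4$, rather than some larger fraction, in the depth cutoff: we need the empirical counts $m(2/r)^i$ to stay polynomially large in $m$ for concentration to bite).
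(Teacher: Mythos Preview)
Your plan is essentially the paper's own argument: Condition~2 is inherited deterministically from \thmref{point-partition-tree}, and Condition~1 is obtained by embedding all the regions $\Xi(v)$ (for $i\le k=\lfloor(1/4)\log_r m\rfloor$) into a fixed range class of VC dimension polylogarithmic in $m$, then invoking a uniform VC deviation bound to pass from the empirical guarantee $|S\cap\Xi(v)|/m\le(2/r)^i$ to the population guarantee $\Pr(\Xi(v))\le(3/r)^i$.

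Two points where you should tighten things. First, your description of $\Xi(v)$ as ``cut out by at most $(d+1)(i+1)$ halfspaces'' is the description of $\poly(v)$, not of $\Xi(v)$; the set $\Xi(v)$ is in general non-convex, since at each level the search path enters $\Delta_j^*=\Delta_j\setminus\bigcup_{j'<j}\Delta_{j'}$. The correct ambient class (which the paper uses) is the family $\mathcal{A}$ of sets of the form ``intersection of $k$ closed simplices, minus the union of $k(r-1)$ simplices''; this has shatter function at most $(m+1)^{rk(d+1)^2}$, which is what feeds into the VC bound. Your own parenthetical ``making precise the claim that the regions $\Xi(v)$ \ldots'' already flags this as the main obstacle, so just be sure to use the right class. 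Second, the paper uses a plain \emph{additive} VC inequality (Devroye's version of Vapnik--Chervonenkis) with threshold $t=r^{-k}=m^{-1/4}$, not a relative one: since $(3/r)^i-(2/r)^i\ge r^{-i}\ge r^{-k}$, an additive deviation of $m^{-1/4}$ suffices uniformly over all levels, and the exponent $-2mt^2=-2m^{1/2}$ drops out directly, which is cleaner than routing through a multiplicative bound. Your final ``union bound over $O(m^{1/4})$ nodes'' is then unnecessary, as the VC inequality is already uniform over all of $\mathcal{A}$.
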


\def\isdef{\buildrel {\rm def} \over =}
\def\PROB{\Pr}

\begin{proof}
  Condition~2 follows with certainty from \thmref{point-partition-tree}.
  In this proof we bound the probability of failure for Condition~1.
  Let $k=\lfloor (1/4)\log_rm\rfloor$.

  We will use $D_m(A)$ to denote the empirical measure of a set
  $A\subseteq\R^d$:
  \[ 
    D_m (A) \isdef {{|S\cap A|} \over m} \enspace .
  \]  
  From \thmref{point-partition-tree} we have
  \[
    \sup_{v\in V_i} D_m(\Xi(v)) \le (2/r)^i \enspace .
  \]
  Now,
  \begin{eqnarray*}
    \PROB \left( \sup_{v\in V_i} D ( \Xi(v) ) > \left(\frac{3}{r}\right)^i \right)
    &=& \PROB \left( \cup_{v\in V_i} \left[ D ( \Xi(v) ) - D_m(\Xi(v))  >  \left(\frac{3}{r}\right)^i - D_m(\Xi(v)) \right] \right) \cr
    &\le& \PROB \left( \cup_{v\in V_i} \left[ D ( \Xi(v) ) - D_m(\Xi(v))  > \left(\frac{3}{r}\right)^i - \left(\frac{2}{r}\right)^i  \right] \right) \cr
    &\le& \PROB \left( \sup_{v\in V_i} \left(  D ( \Xi(v) ) - D_m(\Xi(v)) \right) > r^{-i} \right) \cr
    &\le& \PROB \left( \sup_{A \in {\mathcal{A}}} \left( D ( A ) - D_m ( A ) \right)  > r^{-i} \right) \cr
  \end{eqnarray*}
  where $\mathcal{A}$ are sets formed by taking the intersection of $k$
  closed simplices and subtracting $k(r-1)$  (possibly empty) simplices
  from this intersection.  This is because $\Xi(V)\subseteq \mathcal{A}$.
  We can actually handle all levels $i\in \{0,\ldots,k\}$ of the tree
  at once with the inequality
  \begin{eqnarray*}
  \PROB \left( \cup_{i\in\{1,\ldots,k\}}\left[\sup_{v\in V_i} D ( \Xi(v) ) > \left(\frac{3}{r}\right)^i \right] \right )
  \leq \Pr \left( \sup_{A \in {\mathcal{A}}} \left( D ( A ) - D_m ( A ) \right)  > r^{-k} \right) \enspace .
  \end{eqnarray*}
  The class $\mathcal{A}$ for $k=1$ and $r=1$ is the class of all
  simplices in $\R^d$.   Since a simplex in $\R^d$ is the common
  intersection of $d+1$ halfspaces in $\R^d$, it helps to first consider
  halfspaces.  The set of halfspaces in $\R^d$ has Vapnik-Chervonenkis
  dimension $d+1$ (this follows from Radon's Theorem \cite{e93}).
  By Sauer's lemma \cite{s72}\cite[pages~28--29]{dl01}, the number of
  subsets of an $m$-point set that can be obtained by intersections
  with halfspaces does not exceed $(m+1)^{d+1}$.  A simple combinatorial
  argument then implies that the number of subsets of an $m$-point set
  that can be obtained by intersections with simplices does not exceed
  $(m+1)^{(d+1)^2}$.

  Assume now general $r$ and $k$. Then the number of subsets of an
  $m$-point set that can be obtained by intersections with sets from
  $\mathcal{A}$ does not exceed $(m+1)^{rk(d+1)^2}$, by the same
  combinatorial argument.  By a version of the Vapnik-Chervonenkis
  inequality \cite{vc71} shown by Devroye \cite{d82},
  \[
    \PROB \left( \sup_{A \in {\mathcal{A}}} \left| D ( A ) - D_m ( A ) \right|  \ge t \right)
    \le 4 \left( m^2 + 1 \right)^{rk(d+1)^2} \exp\left(4t+4t^2-2mt^2\right) \enspace ,
  \]
  for all $t>0$.
  Thus, 
  \begin{eqnarray*}
    \PROB \left( \cup_{i\in\{1,\ldots,k\}}\left[\sup_{v\in V_i} D ( \Xi(v) ) > \left(\frac{3}{r}\right)^i \right] \right)
    &\le& 4 \left( m^2 + 1 \right)^{rk(d+1)^2} \exp\left(4r^{-k}+4r^{-2k}-2mr^{-2k}\right) \\
    &\le& \exp\left(2+4r^{-k}+4r^{-2k}+rk(d+1)^2\ln(m^2+1)-2mr^{-2k}\right) \\
    &\le& \exp\left(10+r(d+1)^2\log_r(m)\ln(m^2+1)-2m^{1/2}\right) ~,\\
    & = & O\left(e^{-m^{1/2}}\right)
    \end{eqnarray*}
  since $k= (1/4)\log_rm$.
\end{proof}

Note that, so far, we have not considered the query problem $\mathcal{P}$
at all;  the sample partition tree $T_D$ of \thmref{prob-partition-tree}
is defined completely in terms of the probability measure $D$.
The \emph{odds-on tree} $T_{\mathcal{P},D}$ for $(\mathcal{P},D)$
is obtained in the following way:  We start by constructing a sample
partition tree $T_D$ as described in \thmref{prob-partition-tree}
using the value $m=n^{\tau}$ for some parameter $\tau > 0$ and setting
the maximum depth to $k=\lfloor{(1/4)\log_{r} m}\rfloor$.

Next, we trim some nodes of $T_{\mathcal{P},D}$. We use the
Interference Oracle to test, for each node $v$ of $T_D$, if
$\mathcal{P}(p)=\mathcal{P}(q)$ for all pairs of points $p,q\in
\poly(v)$.  If so, we remove all the subtrees rooted at the children
of $v$, we call $v$ a \emph{terminal leaf}, and we label $v$ with the
label $\ell(v)=\mathcal{P}(q)$.  Note that the Interference Oracle works
for simplices, but $\poly(v)$ is a polytope. Thus, this test requires
decomposing $\poly(v)$ into simplices and using the Interference
Oracle on each simplex. Using the \emph{bottom vertex triangulation}
\cite{c88} for this decomposition allows us to decompose $\poly(v)$ into
$O((\depth_T(v))^{\floor{d/2}})=\log^{O(1)} n$ simplices in $O(\log^{O(1)}
n)$ time.  This yields the following lemma:

\begin{lem}\lemlabel{odds-on-tree-construction}
  For any $n>0$ and any $\tau > 0$, an odds-on tree of size $m=n^\tau$,
  with maximum depth $k=\floor{(1/4)\log_{r/3} m}$, and having the
  properties of \thmref{prob-partition-tree} can be constructed in
  $O(n^\tau\log^{O(1)} n)$ time plus the cost of $O(n^\tau)$ calls
  to the Sampling Oracle and $O(n^\tau\log^{O(1)} n)$ calls to the
  Interference Oracle.
\end{lem}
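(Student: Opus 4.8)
The plan is to combine the three things that have already been set up: the construction of the sample partition tree from \thmref{prob-partition-tree}, the polytope-decomposition bound, and the per-simplex cost of the Interference Oracle. The proof is essentially bookkeeping, so I would proceed as follows.

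First I would build the sample partition tree $T_D$ on a sample $S$ of $m=n^\tau$ points drawn i.i.d.\ from $D$, using \thmref{prob-partition-tree} with maximum depth $k=\lfloor(1/4)\log_{r/3}m\rfloor$. (Note the slight discrepancy in the constant inside the floor between the theorem statement and the prose; I would simply quote the version in the lemma, since the probabilistic guarantee of \thmref{prob-partition-tree} holds for all depths up to $\lfloor(1/4)\log_r m\rfloor$ and $\log_{r/3}m\le\log_r m$, so the desired properties are inherited on the trimmed range of depths with probability $1-O(e^{-m^{1/2}})$.) This step costs $O(m\log m)=O(n^\tau\log n)$ time plus $O(m)=O(n^\tau)$ calls to the Sampling Oracle, directly from \thmref{prob-partition-tree}, and the number of nodes of $T_D$ is $O(m)$ since each internal node has at least two children and the leaves partition $S$ (more crudely, $\sum_{i\le k} r^i = O(r^k)=O(m^{1/4})\cdot r = O(m)$, and in fact the node count is $O(m)$ because the sizes $|S\cap\Xi(v)|$ at a fixed depth sum to $m$).

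Next I would account for the trimming. For each node $v$ of $T_D$, we must test whether $\mathcal{P}$ is constant on $\poly(v)$. Since $v$ has depth at most $k=O(\log n)$, the polytope $\poly(v)$ is the intersection of $O(d\cdot k)=O(\log n)$ halfspaces and hence, by the cited bound, has $O((\depth_T(v))^{\floor{d/2}})=\log^{O(1)}n$ vertices; its bottom vertex triangulation \cite{c88} therefore consists of $\log^{O(1)}n$ simplices and can be computed in $\log^{O(1)}n$ time. Running the Interference Oracle on each such simplex requires $\log^{O(1)}n$ oracle calls per node, so over all $O(m)$ nodes we use $O(m\log^{O(1)}n)=O(n^\tau\log^{O(1)}n)$ Interference Oracle calls and $O(n^\tau\log^{O(1)}n)$ additional time for the triangulations and bookkeeping. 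Labelling each terminal leaf $v$ with $\ell(v)=\mathcal{P}(q)$ for an arbitrary $q\in\poly(v)$ (obtained as a vertex of the triangulation) is free once we know $\mathcal{P}$ is constant there. Summing the two phases gives the claimed $O(n^\tau\log^{O(1)}n)$ time, $O(n^\tau)$ Sampling Oracle calls, and $O(n^\tau\log^{O(1)}n)$ Interference Oracle calls, and the structural properties of \thmref{prob-partition-tree} are preserved by trimming since trimming only removes subtrees.

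The one point needing slightly more care — the ``main obstacle,'' such as it is — is making the depth bound airtight: I must confirm that every node $v$ that is actually tested has $\depth_T(v)\le k=O(\log n)$, so that $\poly(v)$ really is bounded by $\log^{O(1)}n$ halfspaces and the triangulation has only $\log^{O(1)}n$ simplices. This is immediate from the construction (the recursion in the partition tree stops at depth $k$), but it is the hinge that turns the generic $O(i^{\floor{d/2}})$ vertex bound into a polylogarithmic one, and hence the whole per-node cost into $\log^{O(1)}n$. Everything else is a routine sum of two geometric/polylogarithmic contributions.
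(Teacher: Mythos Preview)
Your proposal is correct and follows the same approach as the paper: the lemma is stated immediately after the construction description, with the ``proof'' being exactly the bookkeeping you describe (sample $m=n^\tau$ points, build $T_D$ via \thmref{prob-partition-tree}, then trim using the bottom-vertex triangulation of each $\poly(v)$ into $\log^{O(1)} n$ simplices). Your handling of the $\log_{r/3}$ versus $\log_r$ discrepancy and the per-node depth bound is appropriate and matches the paper's implicit reasoning.
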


Using an odds-on tree to answer a query $q\in \R^d$, is easy: We follow
the search path for $q$ in $T_{\mathcal{P},D}$ until we either reach
a terminal leaf $v$, in which case we output $\ell(v)$, or we reach a
non-terminal leaf $w$ after $O(\log n)$ steps, in which case we rely
on the Backup Oracle to report $\mathcal{P}(v)$ in $O(\log n)$ time.
The correctness of this procedure follows immediately from the definition
of terminal and non-terminal nodes.  In the next section, we analyze
the performance of odds-on trees.

\section{Analysis}
\seclabel{analysis}

In this section, our goal is to lower-bound the expected cost of any
linear decision tree that solves $\mathcal{P}$ in terms of the odds-on
tree $T_{\mathcal{P},D}$.  We accomplish this by decomposing the nodes
of $T_{\mathcal{P},D}$ into subsets with some helpful combinatorial
properties.

An \emph{$i$-set} of a rooted tree $T$ is a set of vertices in $T$ all of
which are at depth at most $i$ and in which no vertex in the set is the
ancestor of any other vertex in the set.  We say that a set of regions
$F=\{X_1,\ldots,X_t\}$, $X_i\subseteq\R^d$, is in \emph{$k$-general
position} if there is no hyperplane that intersects $k$ or more elements
of $F$.

\begin{lem}\lemlabel{independent}
  Let $T_{\mathcal{P},D}$ be the odds-on tree defined in
  \secref{data-structure}, let $V$ be an $i$-set of $T_{\mathcal{P},D}$,
  and let $k>1$ be a constant.  Then $V$ contains a subset $V'\subseteq
  V$ such that the elements of $\poly(V')$ are pairwise disjoint and  in
  $k$-general position and $|V'| = \Omega(|V|/r^{i(d/k+1-1/d+\delta)})$,
  where $\delta > 0$ is a decreasing function of $r$.
\end{lem}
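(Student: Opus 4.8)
The plan is to extract $V'$ from $V$ in two stages: first make the polytopes pairwise disjoint, then enforce $k$-general position, at each stage bounding how much of $V$ we are forced to discard. For the disjointness stage, observe that the polytopes $\poly(v)$ for $v\in V$ need not be disjoint, but two of them overlap only when one of $v,w$ is an ancestor of the other or when... actually they can overlap more subtly, so the cleaner route is to pass from $\poly(v)$ to $\Xi(v)$: the sets $\Xi(v)$ over an $i$-set are pairwise \emph{disjoint} by construction (each point of $\R^d$ has a unique search path, and an $i$-set is an antichain). So I would not work with $\poly$ directly for disjointness; instead I keep all of $V$ and rely on $\Xi(V)$ being a partition of a subset of $\R^d$. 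The real loss comes entirely from the $k$-general-position requirement on $\poly(V')$.

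For the general-position stage, the key input is Condition~2 of \thmref{prob-partition-tree} (equivalently \thmref{point-partition-tree}): for a fixed hyperplane $\ell$, the number of depth-$j$ nodes whose polytope interiors meet $\ell$ is at most $(cr^{1-1/d})^j$. An $i$-set has its nodes spread over depths $0,\ldots,i$, so the number of $v\in V$ with $\poly(v)^\circ\cap\ell\neq\emptyset$ is at most $\sum_{j=0}^i (cr^{1-1/d})^j = O((cr^{1-1/d})^i)$. Now I set up a conflict hypergraph / greedy-independent-set argument: I want $V'\subseteq V$ so that no hyperplane stabs $k$ members of $\poly(V')$. The standard tool is the $\eps$-net / packing argument — think of each $v\in V$ as "blocking" the set of hyperplanes that stab $\poly(v)$, realize that the hyperplanes that stab $k$ of them form a bounded-VC-dimension range space (hyperplanes in $\R^d$ have dual VC dimension $O(d)$), and greedily pick nodes while deleting, at each step, the few nodes whose polytope shares a $k$-fold-stabbing hyperplane with an already-chosen node. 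Each already-chosen node's polytope is stabbed by a set of hyperplanes that, in the dual, is the union of the $O(i^{\lfloor d/2\rfloor})$ halfspaces bounding $\poly(v)$; a hyperplane stabbing $k$ polytopes lies in a common cell of the arrangement of these duals. Counting: if we have chosen $t$ nodes so far, the dual hyperplanes number $O(t\cdot i^{\lfloor d/2\rfloor})$, their arrangement has $O((t i^{\lfloor d/2\rfloor})^d)$ cells, and by Condition~2 each cell (each "combinatorial hyperplane type") accounts for $O((cr^{1-1/d})^i)$ nodes of $V$; so the number of nodes that can be in $k$-fold conflict with the current $V'$ is $O\big(\tfrac1k (t i^{\lfloor d/2\rfloor})^d (cr^{1-1/d})^i\big)$. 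Choosing $t = \Theta\big(|V|\,/\,(i^{O(1)} (cr^{1-1/d})^i / k)^{?}\big)$ and solving the resulting recurrence for how long the greedy process runs before exhausting $V$ gives $|V'| = \Omega(|V|/r^{i(1-1/d)}\cdot r^{-i d/k}\cdot r^{-i\delta})$, where the $r^{-id/k}$ factor comes from amortizing the $k$-fold slack across the $d$-dimensional dual arrangement and $\delta\to 0$ as $r\to\infty$ absorbs the $c^i$ and $i^{O(1)}$ terms (since $i\le (1/4)\log_r m$, one has $i^{O(1)}=r^{o(i)}$ and $c^i=r^{o(i)\cdot\log_r c}$... more precisely $c^i = r^{i\log_r c}$ and $\log_r c\to 0$).

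The main obstacle I expect is getting the exponent exactly $d/k + 1 - 1/d + \delta$ rather than something weaker: the naive greedy (delete \emph{all} nodes conflicting with each chosen node) only uses that a chosen polytope is stabbed by $O((cr^{1-1/d})^i)$-stabbing hyperplanes and loses a full $r^{i(1-1/d)}$ factor per chosen node, giving $|V'| = \Omega(|V|^{1/\text{something}})$, far too weak. To recover the stated bound one must exploit that $k$-general position only forbids $k$-fold stabbings, so each hyperplane "type" can be hit $k-1$ times for free; this is what produces the $r^{-id/k}$ improvement, and making that precise — i.e., correctly counting hyperplane equivalence classes via the dual arrangement and showing the per-class multiplicity is governed by Condition~2 — is the technical heart. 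I would phrase it as: partition the nodes of $V$ by the cell of the dual-hyperplane arrangement (of the $\poly$ facets) that their stabbing region falls in, note each class has $O((cr^{1-1/d})^i)$ elements and there are $O(|V|^{?})$... actually $O((\text{total facets})^d)$ classes, then within the union take a spread subset hitting each class at most $k-1$ times, and verify the arithmetic closes with the claimed $\delta$.
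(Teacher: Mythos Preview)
Your proposal has two genuine gaps, one in each stage.

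\textbf{Disjointness.} You observe that the regions $\Xi(v)$ over an $i$-set are pairwise disjoint and then say ``I keep all of $V$ and rely on $\Xi(V)$ being a partition.'' But the lemma asks for the \emph{polytopes} $\poly(V')$ to be pairwise disjoint, and $\Xi(v)\subseteq\poly(v)$ gives you nothing here: two polytopes $\poly(v),\poly(w)$ can overlap even when $\Xi(v)\cap\Xi(w)=\emptyset$ (siblings in the partition tree routinely have overlapping simplices; only the $\Delta_i^*$ are disjoint). So you have not addressed the disjointness requirement at all. The paper handles disjointness \emph{after} establishing $k$-general position, and the argument is short: sweep a hyperplane $\ell(t)=\{x_0=t\}$, associate to each $v$ the interval of $t$ for which $\ell(t)$ meets $\poly(v)$, note that $k$-general position means no point lies in $k$ of these intervals, and apply Dilworth's theorem to extract an antichain of size $\ge |V''|/k$ corresponding to pairwise-disjoint polytopes.

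\textbf{General position.} Your greedy/conflict-hypergraph sketch never closes; you yourself flag that the naive greedy loses a full $r^{i(1-1/d)}$ factor per chosen node and the ``$r^{-id/k}$ improvement'' is asserted rather than derived. The paper bypasses all of this with the probabilistic method: take a Bernoulli sample $V''\subseteq V$ with rate $p=r^{-i(d/k+1-1/d+\delta)}$. For a fixed hyperplane $\ell$, Condition~2 bounds the number of stabbed polytopes by $(cr^{1-1/d})^i$, so the probability that $\ge k$ of them survive into $V''$ is at most $\binom{(cr^{1-1/d})^i}{k}p^k\le (cr)^{i(k-k/d)}p^k$. Because each $\poly(v)$ has $O(i^{\lfloor d/2\rfloor})$ vertices and $|V|\le r^i$, a \emph{finite} test set of $O(i^{d^2/2}r^{di})$ hyperplanes witnesses every possible $k$-fold stabbing; a union bound over this test set gives failure probability $O(r^{i(d+k-k/d+\delta)})p^k=o(1)$ once $\delta>(k-k/d)\log_r c$. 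Since $|V''|$ is binomial with median $\ge\lfloor p|V|\rfloor$, the sample is simultaneously large enough with probability $>1/2$. The exponent $d/k+1-1/d+\delta$ falls out of balancing $r^{di}$ (test-set size) against $(cr^{1-1/d})^{ik}p^k$ (per-hyperplane bound), which is exactly the arithmetic you could not pin down greedily. This is the missing idea: sample-then-union-bound over a polynomial-size test set of hyperplanes, not an incremental deletion argument.
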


\begin{proof}
  We will first use the probabilistic method \cite{as08} to establish
  the existence of a (not necessarily disjoint) set $V''$ satisfying the
  size and $k$-general position requirements and then show that $V''$
  contains a large subset $V'$ whose elements are also pairwise disjoint.

  Let $V''$ be a Bernoulli sample of $V$ where each element is selected
  independently with probability $p=r^{-i(d/k+1-1/d+\delta)}$. We
  will prove that
  \[
     \Pr\left\{
        \mbox{$\poly(V'')$ is in $k$-general position 
          and $|V''| = \Omega(p|V|)$}
      \right\} > 0 \enspace .
  \]
  Consider any hyperplane $\ell$. Condition~2 of
  \thmref{prob-partition-tree} implies that $\ell$ intersects the
  interior of at most $(cr^{1-1/d})^{i}$ elements of $V$ for some
  constant $c$.  The probability that $\ell$ intersects the interior of
  $k$ or more elements of $V''$ is therefore no more than
  \[
    \binom{((cr)^{1-1/d})^{i}}{k}\cdot p^k
    \le (cr)^{i(k-k/d)}p^k  \enspace .
  \]
  For each node $v\in V$, $\poly(v)$ has $O(i^{\floor{d/2}})$
  vertices, and the number of nodes in $V$ is at most $r^{i}$.
  Therefore, the elements of $\poly(V)$ define a \emph{test set} $L$ of
  $O((i^{\floor{d/2}}r^i)^d)=O(i^{d^2/2}r^{di})$ hyperplanes such that
  $\poly(V'')$ is in $k$-general position if and only if no hyperplane in
  $L$ intersects $k$ or more elements of $\poly(V'')$. The probability
  that \emph{any} hyperplane in $L$ intersects $k$ or more elements of
  $\poly(V'')$ is therefore at most
  \[
    O(i^{d^2/2}r^{di})\cdot (cr)^{i(k-k/d)}p^k
     = O(r^{i(d+k-k/d+\delta)})p^k 
     = O(r^{\delta-k\delta)})
     = o(1) \enspace 
  \]
  for any $\delta > (k-k/d)\log_r c$.
  The above argument shows that the nodes in
  $V''$ are quite likely to be in $k$-general position. To see that
  $V''$ is sufficiently large, we simply observe that $|V''|$ is a
  $\mathrm{binomal}(|V|,p)$ random variable and therefore has median
  value at least $\lfloor{p|V|}\rfloor = \Omega(|V|/r^{i(d/k+1-1/d+\delta)})$.
  Therefore,
  \[
     \Pr\left\{
        \mbox{$\poly(V'')$ is in $k$-general position 
          and $|V''| = \Omega(|V|/r^{i(d/k+1-1/d+\delta)})$ }
      \right\} \ge 1- (o(1) + 1/2) > 0 \enspace .
  \]
  This establishes the existence of a sufficiently large set $V''$
  such that $\poly(V'')$ is in $k$-general position.

  Finally, we select $V'\subseteq V''$ so that the elements of
  $\poly(V')$ are pairwise disjoint.  To do this, imagine
  sweeping a hyperplane $\ell(t)=\{(x_0,\ldots,x_d)\in\R^d:x_0=t\}$ from
  $t=-\infty$ to $+\infty$.  Associate with each element $v\in V'$, the
  maximal interval $[a_v,b_v]$ such that $\poly(v)$ intersects $\ell(t)$
  for all $t\in[a_v,b_v]$. This yields a set $S_{V''}$ of real intervals
  such that no point is contained in $k$ or more elements of $S_{V''}$.
  Dilworth's Theorem \cite{d50} implies that $S_{V''}$ contains a subset
  of size at least $|V''|/k$ of non-overlapping intervals. This subset
  corresponds to a subset $V'\subseteq V''$ with $|V'|\ge |V''|/k$
  and such that the elements of $\poly(V')$ are pairwise
  disjoint. The set $V'$ satisfies all the conditions of the lemma.
\end{proof}

We are now ready to show that the expected search time in the odds-on
tree is a lower bound on the expected cost of any linear decision tree
that solves $\mathcal{P}$.

\begin{lem}\lemlabel{lower-bound}
  Let $T_{\mathcal{P},D}$ be the odds-on tree defined in
  \secref{data-structure} and assume $T_{\mathcal{P},D}$ satisfies
  Conditions~1 and 2 of \thmref{prob-partition-tree}.  Let $L$ denote
  the set of leaves of $T_{\mathcal{P},D}$, and let $T^*$ be any linear
  decision tree that solves $\mathcal{P}$.  Then
  \[ \mu_D(T^*) = \Omega(H(\Xi(L))-1) \enspace . \]
\end{lem}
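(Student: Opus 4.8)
The plan is to combine Shannon's lower bound (\thmref{shannon}) with the combinatorial structure of the odds-on tree via \lemref{independent}. Intuitively, we want to show that the leaf regions $\Xi(L)$ are ``hard'' to separate with a linear decision tree: any such tree must, in effect, distinguish among many of the leaf regions, and by a counting argument using general position it cannot collapse too many of them. The key point is that a linear decision tree $T^*$ of small expected cost would induce, via its leaves, a low-entropy partition of $\R^d$ into convex cells, and we will argue that this partition refines (a large portion of) $\Xi(L)$ in a way that forces $H(\Xi(L))$ to be small as well.

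First I would set up the following. Apply \lemref{independent} at the leaf level, i.e.\ take $V=L$ and note that every leaf is at depth at most $k = \Theta(\log_r m) = \Theta(\log n)$; choosing $k$ (the general-position parameter) a suitable constant, we obtain a subset $V'\subseteq L$ with $|V'| = \Omega(|L|/r^{\,i(d/k+1-1/d+\delta)})$ whose polytopes $\poly(V')$ are pairwise disjoint and in $k$-general position. The crucial feature is that the exponent $d/k + 1 - 1/d + \delta$ can be made arbitrarily close to $1 - 1/d$ by taking $k$ large and $r$ large (so $\delta$ small), and in any case, since the depth $i\le k = \frac14\log_r m$, the factor $r^{-ic}$ is at most $m^{-c/4}$; with $m = n^\tau$ and $\tau$ small this is a controlled polynomial loss, so $H(\Xi(V')) \ge H(\Xi(L)) - O(1)$ up to constants once we account for the entropy lost by passing to a sub-partition (entropy of a coarsening drops by at most $\log(\text{number of parts merged})$, which here is $O(\log n)$ — this will need to be reconciled with the per-level probability bound $\Pr(\Xi(v))\le(3/r)^i$ from Condition~1, which caps the contribution of deep leaves).

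Next, the heart of the argument: I would show that for a linear decision tree $T^*$ solving $\mathcal{P}$, the expected cost $\mu_D(T^*)$ is at least (a constant times) the entropy of the partition of $\R^d$ induced by the cells of $T^*$ intersected with the disjoint regions $\poly(V')$. Because the $\poly(V')$ are convex and in $k$-general position, each hyperplane used in $T^*$ cuts only $O(1)$ of them, so a root-to-cell path of length $\ell$ in $T^*$ can separate the points of at most $O(\ell)$ distinct $\poly(v)$'s from one another — equivalently, within any single $\poly(v)$ the tree $T^*$ behaves like a decision tree whose relevant depth is only the number of hyperplanes actually meeting $\poly(v)$. This is where \thmref{shannon} enters: restricting $T^*$ to $\poly(v)$ and using $D_{|\poly(v)}$, the conditional expected cost is at least $H$ of the answer-classes inside $\poly(v)$; but since $\poly(v)$ is a terminal leaf's polytope only when $\mathcal{P}$ is constant on it, for non-terminal leaves there is genuine entropy to extract. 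Summing $\Pr(\poly(v))$ times these conditional bounds, plus the ``which $v$'' entropy term (paid for by the $O(1)$ hyperplanes that route between the $\poly(v)$'s, using $k$-general position to bound how many $v$'s a single path touches), reconstructs $H(\Xi(V'))$ up to constant factors and an additive constant.

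The main obstacle I anticipate is the bookkeeping in this last step — cleanly charging the entropy ``$H(\text{which } \poly(v))$'' against $\mu_D(T^*)$ when the regions $\Xi(v)$ are neither convex nor connected (only $\poly(v)\supseteq\Xi(v)$ is convex), and making sure the general-position hypothesis is strong enough that a bounded-length path in $T^*$ really can only discriminate among $O(1)$ regions at a time rather than, say, $O(\ell)$ in a way that breaks the telescoping. I expect the fix is to prove a ``restriction lemma'': if $F$ is a family of pairwise-disjoint convex regions in $k$-general position and $T^*$ is any linear decision tree, then $\mu_D(T^*) \ge \frac{1}{k}\bigl(H(F') + \sum_{X\in F'} \Pr(X)\,\mu_{D_{|X}}(T^*|_X)\bigr) - O(1)$ for the subfamily $F'$ of regions on which $\mathcal{P}$ is non-constant — proved by an amortized/potential argument on the structure of $T^*$, peeling off hyperplanes from the root — and then invoke \thmref{shannon} on each $T^*|_X$. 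Chaining this with the entropy-preservation from \lemref{independent} gives $\mu_D(T^*) = \Omega(H(\Xi(L)) - 1)$.
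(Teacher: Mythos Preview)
Your proposal assembles the right ingredients---\lemref{independent}, $k$-general position, and Shannon---but there is a genuine gap in how you deploy them.

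The main problem is that you apply \lemref{independent} \emph{once} to $V=L$ and then try to argue $H(\Xi(V'))\ge H(\Xi(L))-O(1)$.  This is false in general and you essentially notice it yourself (``entropy of a coarsening drops by at most $\log(\text{number of parts merged})$, which here is $O(\log n)$'').  A single application of \lemref{independent} at depth $i\approx k$ keeps only an $r^{-\Theta(i)}$ fraction of the leaves, i.e.\ a polynomially small fraction of $L$.  The discarded leaves $L\setminus V'$ can carry almost all of the probability mass, so $\Pr(\cup\Xi(V'))$ can be $o(1)$; even if your restriction lemma held, the bound $\mu_D(T^*)\ge \Pr(V')\cdot\mu_{D_{|V'}}(T^*)$ is then vacuous.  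Condition~1 does not rescue this: it upper-bounds individual leaf probabilities, but gives you no control over which leaves \lemref{independent} keeps.

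The paper fills this gap with two ideas you are missing.  First, it passes from $L$ to the set $L'$ of \emph{non-terminal} leaves (losing only $O(1)$ entropy); this is what makes the ``$\poly(w)$ contains $\poly(v)\Rightarrow$ contradiction'' step go through, since $\mathcal{P}$ is non-constant on $\poly(v)$ only for non-terminal $v$.  Second---and this is the crux---it stratifies $L'$ into probability levels $G_i=\{v:\Pr(\Xi(v))\in(2^{-i},2^{-i+1}]\}$ and then applies \lemref{independent} \emph{repeatedly} inside each $G_i$, peeling off subgroups $G_{i,1},G_{i,2},\ldots$ until only a small leftover $G_{i,t_i}$ of size $O(2^{\beta i})$ ($\beta<1$) remains.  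This covers essentially all of $L'$, so nothing is thrown away.  Your restriction-lemma argument is then run on each $G_{i,j}$ separately to get $(k+1)\mu_{D_{i,j}}(T^*)\ge H(G_{i,j})$, and these combine to $(k+1)\mu_D(T^*)\ge H(L')-\bar H - O(1)$ where $\bar H$ is the entropy of the partition into subgroups.  The probability stratification is exactly what makes $\bar H\le(1-\alpha)H(L')$: since $|G_{i,j}|=\Omega(2^{\alpha i})$ and each leaf in it has probability $\ge 2^{-i}$, one has $\Pr(G_{i,j})\ge 2^{-(1-\alpha)i}$, so $\log(1/\Pr(G_{i,j}))\le(1-\alpha)i\approx(1-\alpha)\log(1/\Pr(\Xi(v)))$ for $v\in G_{i,j}$.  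Without the stratification this accounting collapses.
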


\begin{proof}
  This proof mixes the ideas from the proofs of Lemma~3 by Dujmovi\'c
  \etal\ \cite{dhm09} and Lemma~4 by Collette \etal\ \cite{cdilm08}.
  For an $i$-set $V$ of nodes in $T_{\mathcal{P},D}$, we define the
  shorthands $\Pr(V)=\Pr(\cup\Xi(V))$ and $H(V)=H(\Xi(V))$.  Let $T'$
  be the tree obtained from $T_{\mathcal{P},D}$ by removing all terminal
  leaves, and let $L'$ denote the set of leaves of $T'$.  Note that
  \[  
     H(L') = H(L) - O(\log r) 
     = H(L) - O(1)
  \]
  since each leaf in $L'$ has at most $r=O(1)$ children in $L$. 

  We first partition $L'$ into groups $G_1,G_2,\ldots$, where $G_i$
  contains all leaves $v$ such that $1/2^{i-1}\ge \Pr(\Xi(v)) \ge
  1/2^{i}$.  Note that Condition~1 of \thmref{prob-partition-tree}
  implies that the depth of a node in $G_i$ is at most $i/\log(r/3)$.

  We then further partition each group $G_i$ into subgroups
  $G_{i,1},\ldots,G_{i,t_i}$. For each $j\in\{0,\ldots,G_{i,t_i-1}\}$,
  $|G_{i,j}|= \Omega(2^{\alpha i})$, for some constant $\alpha >0$
  and the elements of $\poly(G_{i,j})$ are pairwise disjoint and in
  $k$-general position.  Furthermore, the final subgroup, $G_{i,t_i}$
  has size at most $O(2^{\beta i})$, for some constant $\beta < 1$.

  This partitioning is accomplished by repeatedly applying
  \lemref{independent} to remove a subset $G_{i,j}\subseteq G_{i}$
  that is in $k$-general position and has size $\Omega(2^{\alpha i})$,
  stopping the process once the size of $G_i$ drops below $O(2^{\beta
  i})$. This works provided that we choose $\beta$, $k$, and $r$
  so that $\beta > ((\log r)/(\log (r/3) ))((d/k+1-1/d+\delta)$ and
  set $\alpha\le\beta - ((\log r)/(\log (r/3)))(d/k+1-1/d+\delta)$.
  For example, by choosing $r$ and $k$ to be sufficiently large,
  $\beta=1-1/(3d)$ and $\alpha=1/(3d)$.

  Consider the linear decision tree $T^*$ that solves $\mathcal{P}$.
  The leaves of $T^*$ partition $\R^d$ into cells whose closures are
  convex polytopes.  For a leaf $w$ of $T^*$ we denote its polytope
  by $\poly(w)$. If the depth of $w$ is $i$, then $\poly(w)$ is
  the intersection of  at most $i$ halfspaces.  This implies that
  $\poly(w)$ intersects at most $ik$ elements of $\poly(G_{i,j})$ since,
  otherwise, $\poly(w)$ contains $\poly(v)$ for some non-terminal node
  $v\in G_{i,j}$.  (This would contradict the assumption that $T^*$
  solves $\mathcal{P}$.)

  Let $w$ be some leaf of $T^*$ such that $\poly(w)$ intersects $t$
  elements of $\poly(G_{i,j})$.  Then, by the discussion in the previous
  paragraph, $\depth_{T^*}(w) \ge \ceil{t/k}$.  We can easily create a
  subtree $T_w^*$ of $w$ whose height is at most $t$ and with the property
  that $\poly(w')$ intersects at most one element of $\poly(G_{i,j})$
  for each leaf $w'$ of $T_w^*$.\footnote{For example, making the root of
  $T^*_w$ correspond to the bisector of two polyhedra of $\poly(G_{i,j})$
  that intersect $\poly(w)$ means that each of the children of the root
  intersect at most $t-1$ elements of $\poly(G_{i,j})$. Applying this
  recursively yields a subtree of height at most $t-1$.  A slightly more
  involved argument can produce a tree $T^*_w$ of depth $O(\log t)$.} If
  we do this for every leaf $w$ of $T^*$ we obtain a tree $T^{*}_{i,j}$
  such that every leaf of $T^*_{i,j}$ intersects at most one element
  of $\poly(G_{i,j})$.

  Let $D_{i,j}=D_{|\cup\Xi(G_{i,j})}$ denote the distribution
  $D$ conditioned on $\cup\Xi(G_{i,j})$.  Note that the leaves of
  $T^*_{i,j}$ could be relabeled so that they indicate which element of
  $\poly(G_{i,j})$ (if any) that they intersect.  By Shannon's Theorem
  (\thmref{shannon}), this implies that
  \[
    (k+1)\cdot\mu_{D_{i,j}}(T^*)
       \ge \mu_{D_{i,j}}(T^*_{i,j}) 
       \ge H(G_{i,j}) \enspace .
  \]
  It follows \cite[Lemma~3]{cdilm09} that
  \[
    (k+1)\cdot\mu_D(T^*) \ge H(L') 
       - H(\{\cup \Xi(G_{i,j}):i\in\N,\, j \in\{1,\ldots,t_{i}\}) 
       - O(1) \enspace .
  \]
  Thus, all that remains is to upper-bound the contribution of
  $\bar{H}=H(\{\cup \Xi(G_{i,j}):i\in\N,\, j \in\{1,\ldots,t_{i}\})$,
  as follows:
  \begin{eqnarray*}
    \bar{H} &= & H(\{\cup G_{i,j}:i\in\N,\, j \in\{1,\ldots,t_{i}\}) \\
     & = & \sum_{i=1}^\infty \sum_{j=1}^{t_{i}} 
         \Pr(G_{i,j})\log(1/\Pr(G_{i,j})) \\
   & = & \sum_{i=1}^\infty
        \left( 
          \sum_{j=1}^{t_{i}-1} 
             \Pr(G_{i,j})\log(1/\Pr(G_{i,j}) 
             + \Pr(G_{i,t_i})\log(1/\Pr(G_{i,t_i}))
        \right) \\
   & \le & \sum_{i=1}^\infty
        \left( 
          \sum_{j=1}^{t_{i}-1} 
             \Pr(G_{i,j})\log(2^{i-\alpha i})
             + i 2^{\beta i - i + 1} + O(1)
        \right) \\
    & \le & (1-\alpha)H(L') + O(1) \enspace .
  \end{eqnarray*}
  Thus, we have 
  \begin{eqnarray*}  
     (k+1)\mu_D(T^*) 
       &\ge& H(L') - \bar{H} -O(1)  \\
       &\ge& \alpha H(L') - O(1) \\
       &\ge& \alpha H(L) - O(1) \\
       & = & \Omega(H(L) - 1) \enspace ,
  \end{eqnarray*}
  so $\mu_D(T^*) = \Omega(H(L) - 1)$, as required.
\end{proof}

\noindent\textbf{Remark:}  By more carefully handling the constants
in the proof of \lemref{lower-bound}, and allowing $k$ and $r$ to be
arbitrarily large, one can obtain the tighter lower-bound
\[
   \mu_D(T^*) + 
   \log (\mu_D(T^*)) \ge \alpha H - O(\log (kr))
\]
where $\alpha$ can be made arbitrarily close to $1/d$ by increasing $k$
and $r$.

\begin{thm}\thmlabel{odds-on}
  Let $\mathcal{P}:\R^d\rightarrow A$ be a decision problem for which we
  have a ($O(\log n)$ time) Backup Oracle and an Interference Oracle,
  and let $D$ be any probability measure over $\R^d$ for which we have
  a Sampling Oracle.  Then, for any constant $\epsilon > 0$,  an odds-on
  tree of size $O(n^\epsilon)$ can be constructed in $O(n^\epsilon)$ time
  plus the cost of $O(n^\epsilon)$ calls to the Sampling Oracle and
  $O(n^\epsilon)$ calls to the Interference Oracle.

  This odds-on tree can, in conjunction with the Backup Oracle, compute
  $\mathcal{P}(q)$ for any $q\in\R^d$ drawn according to $D$ in $O(H^*+1)$
  expected time, where $H^* \le \mu_D(T^*)$ for any linear decision tree
  $T^*$ that solves $\mathcal{P}$.
\end{thm}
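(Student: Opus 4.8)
The plan is to instantiate the construction of \secref{data-structure} with a suitable value of $\tau$, bound the expected search cost in the resulting odds-on tree against $H(\Xi(L))$, and then invoke \lemref{lower-bound}.

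First I would set $\tau=\epsilon/2$ and build the odds-on tree $T_{\mathcal P,D}$ with $m=n^{\tau}$ points and maximum depth $k=\floor{(1/4)\log_{r/3}m}=\Theta(\log n)$, where $r$ is the (large) constant fixed in the proof of \lemref{lower-bound}. By \lemref{odds-on-tree-construction} this uses $O(n^\epsilon)$ time, $O(n^\epsilon)$ calls to the Sampling Oracle, and $O(n^\epsilon)$ calls to the Interference Oracle, and the tree has size $m=O(n^\epsilon)$. I will call the construction \emph{good} if $T_{\mathcal P,D}$ satisfies Conditions~1 and~2 of \thmref{prob-partition-tree} and, in addition, $\sup_{A\in\mathcal A}|D(A)-D_m(A)|\le m^{-1/3}$, where $\mathcal A$ and $D_m$ are as in the proof of \thmref{prob-partition-tree}; the same Vapnik--Chervonenkis inequality used there shows the construction is good with probability $1-O(e^{-m^{\Omega(1)}})=1-o(1)$. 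Correctness of the query algorithm is immediate from the definitions: a query reaching a terminal leaf $v$ is answered with $\mathcal P(q)$ since $\mathcal P$ is constant on $\poly(v)\supseteq\Xi(v)$, and a query reaching a non-terminal leaf is answered by the Backup Oracle. Since every search path has length at most $k=O(\log n)$ with $O(r)=O(1)$ work per node, and the Backup Oracle costs $O(\log n)$, the worst-case query time is always $O(\log n)$; hence a non-good construction contributes only $O(\log n)\cdot O(e^{-m^{\Omega(1)}})=o(1)$ to the expected query time, and from here on I assume the construction is good.

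Next I would estimate the expected query time. Let $L$ be the set of leaves of $T_{\mathcal P,D}$. The sets $\{\Xi(v):v\in L\}$ partition $\R^d$, so $H(\Xi(L))=\sum_{v\in L}\Pr(\Xi(v))\log(1/\Pr(\Xi(v)))$, and the expected query time is $\sum_{v\in L}\Pr(\Xi(v))\,c(v)$, where $c(v)=O(\depth_{T}(v)+1)$ when $v$ is a terminal leaf and $c(v)=O(\log n)$ when $v$ is non-terminal. Condition~1 of \thmref{prob-partition-tree} gives $\depth_{T}(v)\le\log_{r/3}(1/\Pr(\Xi(v)))$ for every leaf $v$, so the terminal leaves contribute $O(H(\Xi(L))+1)$ in total. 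A non-terminal leaf $w$ falls into one of two cases. If $\depth_{T}(w)=k$, then $\log(1/\Pr(\Xi(w)))\ge k\log(r/3)=\Omega(\log n)$, so $c(w)=O(\log n)=O(\log(1/\Pr(\Xi(w))))$ with the hidden constant depending only on $\epsilon$ and $r$, and these leaves also contribute $O(H(\Xi(L)))$. Otherwise $w$ is a leaf because $|S\cap\Xi(w)|\le 1$, so $D_m(\Xi(w))\le 1/m$; since $\Xi(w)\in\mathcal A$, goodness gives $\Pr(\Xi(w))\le 1/m+m^{-1/3}=O(m^{-1/3})$, and since $T_{\mathcal P,D}$ has only $O(r^{k})=m^{c_1}$ nodes for some constant $c_1<1/3$ (once $r$ is large enough), the total probability of such ``short'' leaves is $O(m^{c_1-1/3})=o(1/\log n)$, so they contribute $o(1)$. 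Adding the three parts, the expected query time is $O(H(\Xi(L))+1)$.

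Finally I would invoke \lemref{lower-bound}: because the construction is good, every linear decision tree $T^*$ that solves $\mathcal P$ has $\mu_D(T^*)=\Omega(H(\Xi(L))-1)$, so $H(\Xi(L))=O(\mu_D(T^*)+1)$. Setting $H^*=\inf\{\mu_D(T^*):T^*\text{ a linear decision tree solving }\mathcal P\}$ yields $H^*\le\mu_D(T^*)$ for every such $T^*$ and $H(\Xi(L))=O(H^*+1)$, so the expected query time is $O(H^*+1)$, as required. I expect the delicate step to be the handling of the non-terminal leaves: the full-depth ones are harmless only because $k$ is a constant fraction of $\log_{r/3}m$, whereas the ``short'' leaves are an artifact of the sampling rather than of $\mathcal P$ and must be controlled by the extra, routine concentration bound folded into the good event; one should also check that the additive $O(1)$ in $O(H^*+1)$ genuinely absorbs all the lower-order terms, including the $o(1)$ contributed by non-good constructions.
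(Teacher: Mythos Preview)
Your argument follows the same route as the paper: pick $\tau<\epsilon$, invoke \lemref{odds-on-tree-construction} for the construction bounds, show the expected search cost is $O(H(\Xi(L))+1)$ using Condition~1 of \thmref{prob-partition-tree}, absorb the failure probability into an additive $o(1)$, and finish with \lemref{lower-bound}. The paper's own proof is terser: it writes $\sum_{t\in L}\Pr(\Xi(t))\,O(\depth_T(t))=\sum_{t\in L}\Pr(\Xi(t))\,O(\log(1/\Pr(\Xi(t))))=O(H(\Xi(L)))$ and treats the Backup-Oracle cost at every non-terminal leaf as if it were $O(\depth_T(t))$, without singling out the ``short'' non-terminal leaves (those at depth $<k$ created because $|S\cap\Xi(v)|\le 1$). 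Your addition of the uniform Vapnik--Chervonenkis deviation bound $\sup_{A\in\mathcal A}|D(A)-D_m(A)|\le m^{-1/3}$ to the good event, together with the node-count estimate $r^k=m^{c_1}$ with $c_1<1/3$ for large $r$, makes this case explicit and shows its total contribution is $o(1)$; this is a legitimate refinement of a point the paper glosses over rather than a different approach.
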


\begin{proof}
  Applying \lemref{odds-on-tree-construction} with $\tau < \epsilon$
  yields the stated bounds on the construction time and the use of the
  Sampling and Interference Oracles.

  If $T_{\mathcal{P},D}$ satisfies Conditions~1 and 2 of
  \thmref{prob-partition-tree} then, by \lemref{lower-bound}, the expected
  time to answer queries using $T_{\mathcal{P},D}$ is
  \[
     \sum_{t\in L} \Pr(t)O(\depth_T(t)) 
          = \sum_{t\in L}\Pr(t)O(\log(1/\Pr(t))) = O(H(L)) \enspace .
  \]
  Otherwise, the expected time to answer queries using $T_{\mathcal{P},D}$
  is $O(\log n)$.  Therefore, the expected time (where the expectation
  is taken over $D$ and the random sampling used to generate $S$ in
  \thmref{prob-partition-tree}) to answer a query using an odds-on tree is
  \[
     (1-O(e^{-n^{\tau/2}}))\cdot O(H(L)) + O(e^{-n^{\tau/2}})\cdot O(\log n) =
        O(H(L)+1)  \enspace .
  \]
  On the other hand, by \lemref{lower-bound} the expected cost of any
  linear decision tree $T^*$ that solves $\mathcal{P}$ is
  \[
      \mu_D(T^*) = \Omega(H(L) - 1) \enspace ,
  \]
  which completes the proof.
\end{proof}

Using a standard lifting of query points \cite{yy85}, any degree $p$
algebraic decision tree that solves a problem $\mathcal{P}:\R^d\rightarrow
A$ can be implemented as a linear decision tree in $\R^{d'}$, with
$d'={p+d\choose p}-1$.  Applying this yields the following corollary:

\begin{cor}\corlabel{odds-on-algebraic}
  Let $\mathcal{P}:\R^d\rightarrow A$ be a decision problem for which we
  have an ($O(\log n)$ time) Backup Oracle and an Interference Oracle,
  and let $D$ be any probability measure over $\R^d$ for which we have a
  Sampling Oracle.  Then, for any constants $\epsilon > 0$ and $p\ge 1$,
  using $O(n^\epsilon)$ calls to the Sampling Oracle and $O(n^\epsilon)$
  calls to the Interference Oracle, an odds-on tree can be constructed in
  $O(n^\epsilon)$ time and $O(n^\epsilon)$ space.  This odds-on tree can,
  in conjunction with the backup oracle, answer $\mathcal{P}$-queries
  drawn according to $D$ in $O(H^*+1)$ expected time, where $H^* \le
  \mu_D(T^*)$ for any degree $p$ algebraic decision tree $T^*$ that
  solves $\mathcal{P}$.
\end{cor}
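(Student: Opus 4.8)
The plan is to reduce everything to \thmref{odds-on} by lifting the problem into a higher, but still constant, dimension $d'=\binom{p+d}{p}-1$.  Let $\phi:\R^d\rightarrow\R^{d'}$ send a point $q$ to the vector of all $d'$ non-constant monomials of degree at most $p$ in the coordinates of $q$, arranged so that the first $d$ coordinates of $\phi(q)$ are the coordinates of $q$ itself, and let $\pi:\R^{d'}\rightarrow\R^d$ be the projection onto those first $d$ coordinates, so that $\pi\circ\phi$ is the identity on $\R^d$.  Any $d$-variate polynomial inequality of degree at most $p$, say $g(q)\ge 0$, can be written as $\tilde g(\phi(q))\ge 0$ for an affine function $\tilde g$ on $\R^{d'}$ \cite{yy85}; that is, degree $p$ polynomial inequalities in $\R^d$ pull back through $\phi$ from halfspaces in $\R^{d'}$.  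Set $\mathcal{P}'=\mathcal{P}\circ\pi:\R^{d'}\rightarrow A$, so that $\mathcal{P}'\circ\phi=\mathcal{P}$, and let $D'$ be the push-forward of $D$ under $\phi$: a probability measure on $\R^{d'}$ supported on the Veronese variety $V=\phi(\R^d)$.

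Next I would check that the three oracles for $(\mathcal{P}',D')$ are simulated, with only $O(1)$ overhead, by the given oracles for $(\mathcal{P},D)$.  A Sampling Oracle for $D'$ draws $q\sim D$ and returns $\phi(q)$.  A Backup Oracle for $\mathcal{P}'$ returns $\mathcal{P}(\pi(y))$.  An Interference Oracle for $\mathcal{P}'$, queried on a simplex $\Delta'\subseteq\R^{d'}$, is answered by testing whether $\mathcal{P}$ is non-constant on $\phi^{-1}(\Delta')\subseteq\R^d$.  Because $d$ and $p$ are constants, $\phi^{-1}(\Delta')$ is a semialgebraic region cut out by $O(1)$ polynomial inequalities of degree at most $p$ and of bounded combinatorial complexity; as the abstract promises and \secref{applications} bears out, the Interference Oracle in each application extends to such regions with no asymptotic penalty (this is the ``standard lifting'').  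Running \thmref{odds-on} on $(\mathcal{P}',D')$ in $\R^{d'}$ with these simulated oracles and the given constant $\epsilon$ then produces, in $O(n^\epsilon)$ time and space and using $O(n^\epsilon)$ calls to the Sampling Oracle and $O(n^\epsilon)$ calls to the Interference Oracle, an odds-on tree $T$ that, together with the Backup Oracle, answers a query $y\sim D'$, equivalently a query $\phi(q)$ for $q\sim D$, hence a $\mathcal{P}$-query for $q\sim D$, in expected time $O(H^*+1)$, where $H^*=H(\Xi(L))$ and $L$ is the leaf set of $T$.  Correctness and the running-time calculation (including the contribution of the rare event that $T$ fails the conditions of \thmref{prob-partition-tree}) are exactly as in the proof of \thmref{odds-on}.

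For the matching lower bound, take any degree $p$ algebraic decision tree $T^*$ that solves $\mathcal{P}$ and replace each node test $g(q)\ge 0$ by the halfspace test $\tilde g(y)\ge 0$, obtaining a linear decision tree $\tilde T^*$ of identical shape and leaf labelling with $\mu_{D'}(\tilde T^*)=\mu_D(T^*)$; since $\tilde T^*(\phi(q))=T^*(q)=\mathcal{P}(q)=\mathcal{P}'(\phi(q))$ for every $q$, $\tilde T^*$ solves $\mathcal{P}'$ on $V=\mathrm{supp}(D')$.  The one point requiring care --- and the step I expect to be the main obstacle --- is that $\tilde T^*$ need \emph{not} solve $\mathcal{P}'$ off the variety $V$, whereas \lemref{lower-bound} is stated for linear decision trees that solve $\mathcal{P}'$ on all of $\R^{d'}$.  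I would close this gap by inspecting the proof of \lemref{lower-bound}: the decision tree is used only through the step that a leaf $w$ of $\tilde T^*$ whose polytope $\poly(w)$ contains $\poly(v)$ for a non-terminal node $v$ of $T$ contradicts the hypothesis that $\tilde T^*$ solves $\mathcal{P}'$, and that step needs only that a non-terminal $v$ has $\mathcal{P}'$ non-constant on $\poly(v)\cap V$ together with the fact that $\tilde T^*$ agrees with $\mathcal{P}'$ on $V$; everything else --- Shannon's theorem (\thmref{shannon}) applied under the conditional distributions of that proof, and the entropy estimates --- involves $D'$-expectations only.  It is exactly here that it matters that the simulated Interference Oracle tests $\mathcal{P}$ on $\phi^{-1}(\Delta')$, i.e.\ decides non-constancy of $\mathcal{P}'$ on $\Delta'\cap V$ rather than on all of $\Delta'$: the other natural reduction (via the polytope $\pi(\Delta')$) would be too weak here, and the stronger test only turns \emph{more} nodes of $T$ into terminal leaves, which cannot hurt the query time.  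Consequently the proof of \lemref{lower-bound} yields verbatim that $\mu_{D'}(T^*)=\Omega(H(\Xi(L))-1)$ for every linear decision tree $T^*$ solving $\mathcal{P}'$ on $\mathrm{supp}(D')$, in particular for $\tilde T^*$, so $\mu_D(T^*)=\Omega(H^*-1)$; combined with the $O(H^*+1)$ upper bound and the same constant bookkeeping as in \thmref{odds-on}, this is the corollary.
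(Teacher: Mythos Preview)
Your proof is correct and follows exactly the lifting approach the paper invokes in the single sentence preceding the corollary. In fact you go well beyond the paper's terse justification: you make the oracle simulations explicit, you observe that the Interference Oracle must be applied to $\phi^{-1}(\Delta')$ rather than to $\pi(\Delta')$, and you correctly identify and resolve the subtlety that the lifted linear decision tree $\tilde T^*$ solves $\mathcal{P}'$ only on the Veronese variety $\phi(\R^d)=\mathrm{supp}(D')$, so that \lemref{lower-bound} must be read as requiring correctness only on $\mathrm{supp}(D')$---a point the paper glosses over entirely.
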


\section{Applications}
\seclabel{applications}

In this section, we discuss a few of the many potential applications of
odds-on trees.  In all of our applications, the problem $\mathcal{P}$ is a
query problem over some set of $n$ geometric objects in $\R^d$.  In order
to shorten the statements of the theorems in this section, we say that a
data structure for a problem $\mathcal{P}$ is \emph{distribution-sensitive
(in the linear decision tree model)} if the expected query time of
the data structure is $O(\mu_D(T^*)+1)$ for any linear decision tree
$T^*$ that solves $\mathcal{P}$. Before delving into the details of the
applications, we first outline some general strategies for implementing
the Interference Oracle needed to build an odds-on tree.

An odds-on tree can be made to have size $O(n^\epsilon)$ for any
constant $\epsilon > 0$.  Furthermore, the number of calls to the
Interference Oracle made during the construction of an odds-on tree is
$O(n^\epsilon)$.  In almost all of our applications, the Interference
Oracle can be trivially implemented to run in $O(n)$ time by testing
the query simplex $\Delta$ against each input element.  This means
that, even with no preprocessing, the contribution of calls to the
Interference Oracle to the construction time of an odds-on tree is no more
than $O(n^{1+\epsilon})$.  Some of the subsequent theorems in this paper
will use this fact implicitly.

In other cases, the Interference Oracle corresponds to a natural query for
which there exists (or we can develop) an $O(n^{1-\epsilon})$ query-time
$O(n\log n)$ preprocessing-time data structure.  In these cases, the time
to construct the odds-on tree becomes $O(n\log n)$.  One particularly
common instance of this occurs when Interference Oracle queries can
be reduced to $O(1)$ simplex range counting queries in $\R^{d'}$ for
some constant dimension $d'$.  In this case, Matou\v{s}ek's partition
trees \cite{m92} yield an Interference Oracle that can be constructed in
$O(n\log n)$ time and that can answer queries in $O(n^{1-1/d+\epsilon})$
time. 

\subsection{Point Location Problems}

The \emph{planar point location problem} is to determine which face
of a planar straight line graph $G$ contains a query point $q\in\R^2$.
A number of authors have considered distribution-sensitive algorithms for
this problem \cite{acmr00,amm00,amm01a,amm01b,ammw07,cdilm08,i01,i04}
and have mostly solved it.  The most general such result is due to
Collette \etal\ \cite{cdilm08} and gives a distribution-sensitive
data structure for point location in connected planar subdivisions.
This leaves open the case where the graph, $G$, of the subdivision is not
connected.  Since there are several $O(\log n)$ query-time, $O(n\log n)$
preprocessing-time data structures for planar point location in (possibly
disconnected) planar subdivision \cite{as98,egs86,k83,m90,st86} we can
apply odds-on trees.

To obtain a fast preprocessing time, we can use an
implementation of the Interference Oracle based on partition trees.  For
this problem, the Interference Oracle must answer queries of the form:
``Does the interior of query triangle $\Delta$ intersect more than one face
of $G$?''  We can build a data structure for this problem by first removing
from $G$ any edges and vertices not on the boundary of more than 1 face to
obtain a graph $G'$.  The interior of $\Delta$ intersects more than one
face of $G$ if and only if the interior of $\Delta$ intersects an edge of
$G'$.

If $\Delta$ intersects an edge of $G'$ then $\Delta$ contains a vertex of
$G'$ or some edge of $\Delta$ intersects some edge of $G'$.  Determining if
$\Delta$ contains a vertex of $G'$ is the classic 2-dimensional
simplex-range counting problem that can be solved in $O(n^{1/2+\epsilon})$
time after $O(n\log n)$ preprocessing using partition trees.  To determine
if some edge $uw$ of $\Delta$ intersects some edge $xy$ of $G'$, we observe
that $uw$ and $xy$ intersect if and only if
\[ L(u,w,y) \wedge L(x,y,u) \wedge L(w,u,x) \wedge L(y,x,w) \]
or
\[ R(u,w,y) \wedge R(x,y,u) \wedge R(w,u,x) \wedge R(y,x,w) \]
where $L(a,b,c)$ and $R(a,b,c)$, are predicates that are true if and
only if the sequence of points $abc$ form a left turn, respectively, a
right turn.  If we fix $x$ and $y$, then each of the above predicates is
the sign of a linear function over the variables $u_1$, $u_2$, $w_1$,
and $w_2$.  Therefore, we can treat each edge of $G'$ as a point in
$\R^4$, yielding a point set $S_{G'} \subset\R^4$ such that testing if
edge $uw$ intersects some edge of $G'$ can be reduced to two simplex
range counting queries over $S_{G'}$.  Again, partition trees allow us to
do this in $O(n^{3/4+\epsilon})$ time after $O(n\log n)$ preprocessing.
Therefore, the Interference Oracle for point location can be implemented
to run in $O(n^{3/4+\epsilon})$ time after $O(n\log n)$ preprocessing.
This yields our first theorem:

\begin{thm}\thmlabel{planar-point-location}\thmlabel{first-app}\thmlabel{a}
  There exists a distribution-sensitive data structure for the planar
  point location problem that uses $O(n\log n)$ preprocessing time and
  $O(n)$ space.
\end{thm}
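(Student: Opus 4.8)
The plan is to instantiate \thmref{odds-on} with $d=2$ and $\mathcal{P}$ the planar point location problem on the input graph $G$, so that all that remains is to supply the three oracles and to bound the extra work they cost. For the Backup Oracle I would take any of the classical $O(\log n)$ query-time point-location structures for arbitrary (possibly disconnected) planar straight-line graphs \cite{as98,egs86,k83,m90,st86}; each is built in $O(n\log n)$ time and uses $O(n)$ space. For the Sampling Oracle I rely on the standing assumption of the paper that $D$ is presented so that a sample can be drawn in $O(1)$ time. The Interference Oracle is exactly the query ``does the interior of a triangle $\Delta$ meet more than one face of $G$?'', and I would implement it by the reduction already described above: strip $G$ to the graph $G'$ of edges that bound two faces, handle the ``$\Delta$ contains a vertex of $G'$'' case with a $2$-dimensional simplex range counting structure, and handle the ``an edge of $\Delta$ crosses an edge of $G'$'' case with two simplex range counting queries on the lifted point set $S_{G'}\subset\R^4$. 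Using Matou\v{s}ek's partition trees \cite{m92} for both, this Interference Oracle is built in $O(n\log n)$ time and $O(n)$ space and answers a query in $O(n^{3/4+\epsilon_0})$ time for any constant $\epsilon_0>0$.

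Next I would fix the parameters. Choose constants $\epsilon_0,\epsilon_1>0$ with $3/4+\epsilon_0+\epsilon_1<1$, and apply \thmref{odds-on} with $\epsilon=\epsilon_1$ to build an odds-on tree of size $O(n^{\epsilon_1})$. By that theorem the tree itself costs $O(n^{\epsilon_1})$ time plus $O(n^{\epsilon_1})$ Sampling-Oracle calls and $O(n^{\epsilon_1})$ Interference-Oracle calls; since each Interference call costs $O(n^{3/4+\epsilon_0})$, these calls contribute $O(n^{3/4+\epsilon_0+\epsilon_1})=o(n)$ time in total. Here I would note that in $\R^2$ the bottom-vertex triangulation of each $\poly(v)$ produces triangles, so the Interference Oracle is applied exactly to the objects it was designed for, and the preliminary deletion of non-separating edges and vertices from $G$ makes the simplex test agree with the definition of interference. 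Adding the $O(n\log n)$ cost of building the Backup Oracle and the Interference Oracle's partition trees, the total preprocessing time is $O(n\log n)$; the space is $O(n)$ (Backup Oracle) plus $O(n)$ (partition trees) plus $O(n^{\epsilon_1})$ (odds-on tree), i.e.\ $O(n)$.

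Finally, \thmref{odds-on} guarantees that the resulting structure answers a query $q$ drawn from $D$ in $O(H^*+1)$ expected time with $H^*\le\mu_D(T^*)$ for every linear decision tree $T^*$ solving planar point location, which is exactly the definition of \emph{distribution-sensitive in the linear decision tree model}; the theorem follows. The only place demanding real care is the accounting for the auxiliary structures: one must verify that the Interference Oracle, although it is invoked $\Theta(n^{\epsilon_1})$ times, can be realised in $O(n)$ space and $O(n\log n)$ preprocessing with per-query cost below $n^{1-\epsilon_1}$ — which is precisely why the lifting to $\R^4$ and the partition-tree bound $O(n^{3/4+\epsilon_0})$, rather than a naive $O(n)$ scan (which would only give $O(n^{1+\epsilon_1})$ preprocessing), are needed.
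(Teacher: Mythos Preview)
Your proposal is correct and follows essentially the same approach as the paper: instantiate \thmref{odds-on} with a classical $O(\log n)$ planar point-location structure as the Backup Oracle, and implement the Interference Oracle by stripping $G$ to $G'$, reducing ``$\Delta$ contains a vertex'' to 2-d simplex range counting and ``an edge of $\Delta$ crosses an edge of $G'$'' to 4-d simplex range counting via the left/right-turn lifting, yielding $O(n^{3/4+\epsilon})$ per Interference query after $O(n\log n)$ preprocessing. Your explicit accounting with the parameters $\epsilon_0,\epsilon_1$ satisfying $3/4+\epsilon_0+\epsilon_1<1$ is exactly what the paper leaves implicit.
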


The \emph{3-d point in polytope problem} is the problem of determining if
a query point $q\in\R^3$ is contained in a 3-dimensional polytope $P$.
The Dobkin-Kirkpatrick hierarchy \cite{dk83} gives an $O(\log n)$
query-time, $O(n)$ preprocessing-time data structure for this problem.
Furthermore, Interference Oracle queries (which involve testing if a
tetrahedron intersects the boundary of $P$) can also be answered in $O(\log
n)$ time by the Dobkin-Kirkpatrick hierarchy.

\begin{thm}\thmlabel{b}
  There exists a distribution-sensitive data structure for the
  3-dimensional point in polytope problem that uses $O(n)$ preprocessing
  time and space.
\end{thm}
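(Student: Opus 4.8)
The plan is to apply \thmref{odds-on} (equivalently \corlabel{odds-on-algebraic} with $p=1$) as a black box, so the only real work is supplying the three oracles with the right running times and then reading off the preprocessing time and space from \lemref{odds-on-tree-construction}. The Backup Oracle is immediate: the Dobkin--Kirkpatrick hierarchy \cite{dk83} for a 3-polytope $P$ has $O(n)$ size, is built in $O(n)$ time, and answers point-in-polytope queries in $O(\log n)$ time, which is exactly the ``$O(\log n)$ Backup Oracle'' hypothesis. The Sampling Oracle is supplied by assumption (it is a property of $D$, not of $P$). So the crux is the Interference Oracle.

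For the Interference Oracle we must decide, given a closed simplex (here a tetrahedron) $\Delta\subseteq\R^3$, whether there exist $p,q\in\Delta$ with $p$ inside $P$ and $q$ outside $P$ --- i.e.\ whether $\mathcal{P}$ is non-constant on $\Delta$. The key observation is that, since $P$ is convex and a tetrahedron is convex, $\mathcal{P}$ is non-constant on $\Delta$ if and only if $\Delta$ meets both the interior and the exterior of $P$, and (again using that $\Delta$ is connected) this happens exactly when $\Delta$ intersects the boundary $\partial P$, with the one degenerate exception that $\Delta$ might lie entirely inside $P$ or entirely outside. So the oracle reduces to: test whether $\Delta\cap\partial P\neq\emptyset$, and, if not, test whether a single vertex of $\Delta$ lies in $P$ to decide which constant value to report. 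The first test --- does a query tetrahedron intersect the boundary of a convex polytope --- is itself answerable in $O(\log n)$ time using the Dobkin--Kirkpatrick hierarchy: one can, for instance, reduce it to a constant number of queries of the form ``does a given line segment stab $\partial P$'' (testing the six edges of $\Delta$) together with a containment test, or more directly use the hierarchy to locate $\Delta$ against the nested sequence of approximating polytopes. Either way the Interference Oracle runs in $O(\log n)$ time, which is asymptotically better than the trivial $O(n)$ bound and in particular is $O(n^{1-\epsilon})$ for any $\epsilon<1$.

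Having established an $O(\log n)$-time Interference Oracle, I would finish by invoking \lemref{odds-on-tree-construction}: choosing $\tau$ with $0<\tau<\epsilon$ gives an odds-on tree of size $O(n^\tau)$ built in $O(n^\tau\log^{O(1)}n)=O(n)$ time, using $O(n^\tau)$ Sampling-Oracle calls and $O(n^\tau\log^{O(1)}n)$ Interference-Oracle calls, each of the latter costing $O(\log n)$, for a total of $O(n^\tau\log^{O(1)}n)=O(n)$. Adding the $O(n)$ cost of building the Dobkin--Kirkpatrick hierarchy (which also serves as the backup structure and occupies $O(n)$ space), the preprocessing time and the total space are both $O(n)$. \thmref{odds-on} then gives expected query time $O(H^*+1)$ with $H^*\le\mu_D(T^*)$ for every linear decision tree $T^*$ solving $\mathcal{P}$, which is precisely the claimed distribution-sensitivity.

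I expect the only genuinely non-routine point to be the reduction of the Interference Oracle to a tetrahedron-versus-boundary test and its $O(\log n)$-time implementation via the Dobkin--Kirkpatrick hierarchy; the convexity argument handling the ``$\Delta$ entirely inside/outside $P$'' case is easy but must be stated so that the oracle returns the correct label $\ell(v)=\mathcal{P}(q)$ at terminal leaves. Everything after that is bookkeeping against \lemref{odds-on-tree-construction} and \thmref{odds-on}.
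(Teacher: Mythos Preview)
Your proposal is correct and follows essentially the same approach as the paper: use the Dobkin--Kirkpatrick hierarchy both as the $O(\log n)$ Backup Oracle and, via tetrahedron-versus-boundary intersection, as an $O(\log n)$ Interference Oracle, then invoke \thmref{odds-on} and \lemref{odds-on-tree-construction} to get $O(n)$ preprocessing and space. The paper's justification is in fact terser than yours; your added convexity discussion and bookkeeping are fine (though note that the edge-stabbing reduction alone misses the case $P\subseteq\Delta$, so you should keep the direct DK-hierarchy alternative or add a single ``is some vertex of $P$ in $\Delta$'' test).
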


These results can be extended to higher dimensions, though with more
space \cite{c88}:

\begin{thm}
  There exists a distribution-sensitive data structure for
  the $d$-dimensional point in polytope problem that uses
  $O(n^{\floor{d/2}+\epsilon})$ preprocessing time and space.
\end{thm}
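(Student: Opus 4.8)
The plan is to instantiate \thmref{odds-on} for the query problem $\mathcal{P}_P:\R^d\rightarrow\{0,1\}$ defined by $\mathcal{P}_P(q)=1$ iff $q\in P$, where $P$ is the given convex polytope with $n$ facets. Three ingredients are required: a Sampling Oracle (supplied together with $D$), an $O(\log n)$-time Backup Oracle for $\mathcal{P}_P$, and an Interference Oracle. Once all three are in place, the theorem is essentially immediate from \thmref{odds-on}; the content is in bounding the cost of the first two.

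For the Backup Oracle I would invoke a known $O(\log n)$ query-time data structure for point location in a $d$-dimensional convex polytope that uses $O(n^{\floor{d/2}+\epsilon})$ space and preprocessing time; this is exactly the high-dimensional extension alluded to earlier, and follows from Chazelle's results \cite{c88}. (Concretely, one may fix a point $o\in\mathrm{int}(P)$, radially project the facets of $P$ onto $S^{d-1}$, and point-locate the query ray in the resulting convex subdivision, whose total complexity is $O(n^{\floor{d/2}})$ by the Upper Bound Theorem.) This is the only super-linear step, and it is what produces the $n^{\floor{d/2}+\epsilon}$ term in the statement.

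For the Interference Oracle, I would use the fact that $\Delta$ and $P$ are both convex: $\mathcal{P}_P$ is non-constant on a simplex $\Delta$ if and only if $\Delta\not\subseteq P$ \emph{and} $\Delta\cap P\neq\emptyset$. Indeed, if $\Delta\subseteq P$ the answer is constantly $1$, if $\Delta\cap P=\emptyset$ it is constantly $0$, and otherwise $\Delta$ simultaneously contains a point of $P$ and a point outside $P$. Both tests reduce to fixed-dimensional linear programming: $\Delta\subseteq P$ holds iff $\max_{x\in\Delta}a_i\cdot x\le b_i$ for each facet halfspace $a_i\cdot x\le b_i$ of $P$, and $\Delta\cap P\neq\emptyset$ is a single feasibility query; each is an LP in $d$ dimensions with $O(n)$ constraints, so the Interference Oracle runs in $O(n)$ time with no preprocessing.

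Finally I would apply \thmref{odds-on} with a constant $\epsilon'<\epsilon$: it constructs an odds-on tree of size $O(n^{\epsilon'})$ in $O(n^{\epsilon'})$ time plus $O(n^{\epsilon'})$ Sampling- and Interference-Oracle calls. Since each Interference call costs $O(n)$ and $\floor{d/2}\ge 1$ for $d\ge 2$, the $O(n^{1+\epsilon'})$ spent there is dominated by the $O(n^{\floor{d/2}+\epsilon})$ cost of building the Backup Oracle, so the total preprocessing time and space are $O(n^{\floor{d/2}+\epsilon})$. The resulting structure answers a query $q$ drawn from $D$ in $O(H^*+1)$ expected time with $H^*\le\mu_D(T^*)$ for every linear decision tree $T^*$ solving $\mathcal{P}_P$, which is precisely distribution-sensitivity in the linear decision tree model. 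The main (indeed only) obstacle is justifying the $O(\log n)$-query, $O(n^{\floor{d/2}+\epsilon})$-space Backup Oracle; everything downstream is routine bookkeeping over the oracle costs.
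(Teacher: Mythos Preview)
Your proposal is correct and matches the paper's (implicit) argument: the paper gives no standalone proof for this theorem, merely citing \cite{c88} for the $O(\log n)$-query, $O(n^{\floor{d/2}+\epsilon})$-space backup structure and relying on the earlier remark that a trivial $O(n)$-time Interference Oracle contributes only $O(n^{1+\epsilon})$ to preprocessing. Your LP-based Interference Oracle is a slightly more explicit realization of that same $O(n)$ test, but the overall approach is identical.
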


The problem of \emph{point-location in an arrangement of hyperplanes} is
the problem of determining which cell in an arrangement of $n$ hyperplanes
in $\R^d$ contains a query point $q\in\R^d$.  Liu gives an $O(n^d)$
space and preprocessing-time, $O(\log n)$ query-time data structure for
this problem \cite{l04}.

\begin{thm}
  There exists a distribution-sensitive data structure for point location
  in an arrangement of hyperplanes that uses $O(n^d)$ preprocessing time
  and space.
\end{thm}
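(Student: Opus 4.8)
The plan is to apply \thmref{odds-on} to the query problem $\mathcal{P}:\R^d\to A$ that sends a point $q$ to (the label of) the cell of the arrangement $\mathcal{H}=\{h_1,\dots,h_n\}$ that contains it. Two points lie in the same cell precisely when $\mathrm{sign}\,h_i(q)$ agrees for every $i$, so $\mathcal{P}$ is solved by a linear decision tree (test each $h_i$ in turn); in particular the linear decision tree model is the right one, and \thmref{odds-on} will produce an odds-on tree of size $O(n^\epsilon)$ that, together with a Backup Oracle, answers queries drawn from $D$ in $O(H^*+1)$ expected time, where $H^*$ lower-bounds $\mu_D(T^*)$ over all linear decision trees $T^*$ solving $\mathcal{P}$. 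It therefore suffices to exhibit the two required oracles and to verify that constructing the odds-on tree stays within the $O(n^d)$ budget.

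The Backup Oracle is Liu's structure \cite{l04}: after $O(n^d)$ preprocessing time and space it locates a point in the arrangement in $O(\log n)$ time, which is exactly the worst-case query bound the odds-on framework asks for. For the Interference Oracle, given a simplex $\Delta$ we must decide whether some $p,q\in\Delta$ have $\mathcal{P}(p)\ne\mathcal{P}(q)$, i.e.\ whether, for some $i$, the linear function $h_i$ takes more than one sign on $\Delta$. Since $\Delta$ is an explicitly given intersection of at most $d+1$ halfspaces, for a fixed $i$ this is decided by maximising and minimising $h_i$ over $\Delta$ --- two linear programs with $d$ variables and $O(d)$ constraints, hence $O(1)$ time since $d$ is constant. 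Ranging over all $n$ hyperplanes gives an Interference Oracle that runs in $O(n)$ time with no preprocessing.

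Putting the pieces together, \thmref{odds-on} builds the odds-on tree with $O(n^\epsilon)$ Sampling- and Interference-Oracle calls, so its total construction cost is $O(n^{1+\epsilon})$, which is dominated by the $O(n^d)$ needed to build the Backup Oracle as soon as $d\ge 2$ and $\epsilon$ is chosen small enough; for $d=1$ the problem degenerates to one-dimensional point location among $n$ points and is handled trivially. Since a linear decision tree is the natural (and strongest dichotomous) model for point location in an arrangement, the resulting structure is distribution-sensitive in the sense of \secref{applications}, giving the stated theorem. The only step requiring any care is the sign-on-a-simplex test inside the Interference Oracle, in particular checking that it behaves correctly for the unbounded or lower-dimensional simplices produced by the bottom-vertex triangulation of $\poly(v)$ (e.g.\ when $\Delta$ lies inside a hyperplane); this is routine and affects neither correctness nor the asymptotics.
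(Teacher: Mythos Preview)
Your proposal is correct and follows exactly the route the paper intends: the theorem is one of the applications that the paper leaves implicit, relying on Liu's $O(n^d)$ structure as the Backup Oracle and on the generic observation at the start of \secref{applications} that the Interference Oracle ``can be trivially implemented to run in $O(n)$ time by testing the query simplex $\Delta$ against each input element,'' giving an $O(n^{1+\epsilon})$ odds-on construction cost that is absorbed by the $O(n^d)$ preprocessing. Your added detail on realising the per-hyperplane test via two constant-size LPs (and your remark about unbounded/degenerate simplices) is a welcome elaboration but does not depart from the paper's approach.
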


\subsection{Post-Office Queries}

For an $n$ point set $S\subset\R^d$, the \emph{$d$-dimensional post-office
problem} asks for a point of $p\in S$ that minimizes the Euclidean
distance $\|pq\|$ for a query point $q\in\R^d$.

The post-office problem can be solved efficiently through the use of
point location in Voronoi diagrams.  In 2-dimensions, Voronoi diagrams
are planar graphs of size $O(n)$ and can be computed in $O(n\log n)$ time
\cite{ps85}.  Combining this with \thmref{planar-point-location} gives
a distribution-sensitive data structure for the 2-d post-office problem:

\begin{thm}\thmlabel{c}
  There exists a distribution-sensitive data structure for the
  2-dimensional post office problem that uses $O(n\log n)$ preprocessing
  time and $O(n)$ space.
\end{thm}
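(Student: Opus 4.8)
The plan is to reduce the 2-dimensional post-office problem to planar point location and then invoke \thmref{planar-point-location}. First I would compute the Voronoi diagram $\mathrm{Vor}(S)$ of the $n$ sites; in the plane this is a planar straight-line graph with $O(n)$ vertices, edges, and faces, and it can be built in $O(n\log n)$ time and $O(n)$ space \cite{ps85}. The faces of $\mathrm{Vor}(S)$ are in bijection with the sites: the open cell of a site $p\in S$ is precisely the set of query points strictly closer to $p$ than to any other site. Fixing any rule that resolves ties on cell boundaries (and, as in the point-location setting, treating the lower-dimensional features of $\mathrm{Vor}(S)$ as their own faces so that the partition $\{\mathcal{P}^{-1}(a):a\in A\}$ appearing in \thmref{shannon} is well defined), the answer to a post-office query $q$ is completely determined by the face of $\mathrm{Vor}(S)$ containing $q$. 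Hence a data structure that locates $q$ in $\mathrm{Vor}(S)$ solves the post-office problem.

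Next I would apply \thmref{planar-point-location} to the graph $\mathrm{Vor}(S)$ with the given query distribution $D$. This yields a data structure of size $O(n)$, constructible in $O(n\log n)$ time, whose expected query time is $O(\mu_D(T^*)+1)$ for every linear decision tree $T^*$ that solves point location in $\mathrm{Vor}(S)$. Adding the $O(n\log n)$ cost of computing $\mathrm{Vor}(S)$ leaves the preprocessing time at $O(n\log n)$ and the space at $O(n)$, matching the claimed bounds.

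It remains to transfer the optimality from point location to the post-office problem itself, i.e.\ to show the expected query time is $O(\mu_D(T^\star)+1)$ for any linear decision tree $T^\star$ solving the post-office problem. This is immediate: the leaves of $T^\star$ partition $\R^2$ into convex polyhedral cells on each of which the reported nearest site is constant, so each such cell lies inside a single Voronoi cell; relabelling the leaves of $T^\star$ by the corresponding face of $\mathrm{Vor}(S)$ (using the site$\leftrightarrow$face bijection) produces a linear decision tree for point location in $\mathrm{Vor}(S)$ of exactly the same cost $\mu_D(T^\star)$. The key point making the relabelling legitimate in the \emph{linear} model is that Voronoi cells are convex polygons bounded by perpendicular bisectors, which are linear, so no lifting is needed and the two decision-tree models coincide. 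Combining this with the guarantee of the previous paragraph gives expected query time $O(\mu_D(T^\star)+1)$ for every such $T^\star$, which is exactly the assertion that the structure is distribution-sensitive.

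There is no real obstacle in this argument; it is essentially a clean reduction. The only point that requires a sentence of care is the treatment of query points that fall on cell boundaries, where the nearest neighbour is not unique — this is handled exactly as in \thmref{planar-point-location}, by regarding edges and vertices of $\mathrm{Vor}(S)$ as faces in their own right.
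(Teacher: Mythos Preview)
Your proposal is correct and follows exactly the paper's approach: compute the Voronoi diagram in $O(n\log n)$ time and $O(n)$ space, then invoke \thmref{planar-point-location}. The paper states this in one sentence; your extra paragraph transferring the lower bound from point-location decision trees to post-office decision trees, and your remark on tie-breaking at cell boundaries, simply fill in details the paper leaves implicit.
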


In $d >2$ dimensions, the post-office problem can still be solved using
Voronoi diagrams, but the space and preprocessing costs are higher
\cite{c88postoffice}:

\begin{thm}
  There exists a distribution-sensitive data structure for the
  2-dimensional post office problem that uses $O(n^{\ceil{d/2}+\epsilon})$
  preprocessing time and space.
\end{thm}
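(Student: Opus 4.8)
The plan is to realise the post-office problem as a query problem in the linear decision tree model and then appeal to \thmref{odds-on}. Define $\mathcal{P}:\R^d\to S$ by letting $\mathcal{P}(q)$ be a point of $S$ nearest to $q$. The preimages $\mathcal{P}^{-1}(p)$ are exactly the cells of the Voronoi diagram $\mathrm{Vor}(S)$, and each such cell is an intersection of halfspaces bounded by bisecting hyperplanes of pairs of points of $S$; in particular $\mathcal{P}$ is solved by a linear decision tree. So a data structure whose expected query time is $O(\mu_D(T^*)+1)$ for every linear decision tree $T^*$ solving $\mathcal{P}$ is exactly a distribution-sensitive one, and it suffices to provide, within an $O(n^{\ceil{d/2}+\epsilon})$ budget on preprocessing time and space, a Backup Oracle and an Interference Oracle for $\mathcal{P}$.

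For the Backup Oracle I would first construct $\mathrm{Vor}(S)$ explicitly via the standard lifting of $S$ onto the paraboloid $x_{d+1}=\sum_{i=1}^d x_i^2$ in $\R^{d+1}$: $\mathrm{Vor}(S)$ is then read off from a convex polyhedron of combinatorial complexity $O(n^{\ceil{d/2}})$, computable in $O(n\log n + n^{\ceil{d/2}})=O(n^{\ceil{d/2}+\epsilon})$ time. Chazelle's data structure for the post-office problem in $\R^d$ \cite{c88postoffice} then answers nearest-neighbour queries — equivalently, point location in $\mathrm{Vor}(S)$ — in $O(\log n)$ time using $O(n^{\ceil{d/2}+\epsilon})$ space and preprocessing time; this serves as the Backup Oracle.

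For the Interference Oracle, given a simplex $\Delta\subseteq\R^d$, I would exploit the convexity of Voronoi cells: $\Delta$ is contained in a single closed cell $\mathrm{cell}(p)$ if and only if $p$ is a nearest site of every one of the $O(1)$ vertices of $\Delta$, since $\Delta$ is the convex hull of its vertices. So one computes, by brute force in $O(n)$ time and with no preprocessing, the set of nearest sites of each vertex of $\Delta$ and checks whether these sets have a common element (alternatively, one may simply test $\Delta$ against each of the $O(n^{\ceil{d/2}})$ facets of $\mathrm{Vor}(S)$). By \thmref{odds-on}, building the odds-on tree uses only $O(n^{\epsilon})$ calls to the Interference Oracle (up to polylogarithmic factors), so their total contribution is $O(n^{1+\epsilon})$, dominated by the cost of building $\mathrm{Vor}(S)$ and the Backup Oracle; the odds-on tree itself occupies $O(n^{\epsilon})$ extra space. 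Hence the overall preprocessing time and space are $O(n^{\ceil{d/2}+\epsilon})$, and by \thmref{odds-on} the combined structure answers post-office queries drawn from $D$ in expected time $O(H^*+1)$ with $H^*\le\mu_D(T^*)$ for every linear decision tree $T^*$ solving $\mathcal{P}$ — i.e., it is distribution-sensitive.

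The one genuinely substantial ingredient is the higher-dimensional $O(\log n)$ query-time point-location structure for the polyhedral subdivision $\mathrm{Vor}(S)$ that must fit inside the $O(n^{\ceil{d/2}+\epsilon})$ budget; this is exactly what Chazelle's construction \cite{c88postoffice} delivers, so the remaining work is bookkeeping. A minor point to be careful about is fixing a convention for breaking distance ties on cell boundaries so that the Backup Oracle and the Interference Oracle agree on the value of $\mathcal{P}$ there, but this does not affect any of the asymptotics.
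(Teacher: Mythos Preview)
Your approach matches the paper's: cite Chazelle's $O(n^{\ceil{d/2}+\epsilon})$ post-office structure \cite{c88postoffice} as the Backup Oracle and plug into \thmref{odds-on}. One small caution: simplices in this paper may be unbounded (see the definition in \secref{prelim}), so your vertex-checking Interference Oracle is not quite complete as stated---but your parenthetical alternative of testing $\Delta$ against each of the $O(n^{\ceil{d/2}})$ Voronoi facets handles this and stays within the $O(n^{\ceil{d/2}+\epsilon})$ budget.
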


\subsection{Ray Shooting}

The \emph{ray shooting in a polygon problem} asks for the first point
on the boundary of a polygon $P$ intersected by a query ray.  A query
ray can be represented as a pair of points in $\R^2$ (the source and
any other point on the ray), so this is a query problem over $\R^4$.
Several $O(n)$ preprocessing time $O(\log n)$ query time solutions to
this problem exist \cite{cegghss94,hs95}.

\begin{thm}
  There exists a distribution-sensitive data structure for the ray shooting
  in a polygon problem that uses $O(n^{1+\epsilon})$ preprocessing time and
  $O(n)$ space.
\end{thm}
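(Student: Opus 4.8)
The plan is to obtain the result by a direct application of \thmref{odds-on} (via \lemref{odds-on-tree-construction}), so the only real work is to supply the two oracles with the required running times. A query ray in the plane is specified by its source together with one other point on it, so ray shooting is a query problem $\mathcal{P}:\R^4\rightarrow A$ in which $A$ is the set of edges of the simple polygon $P$ together with two extra symbols, one for rays that miss $\partial P$ and one for degenerate points $(s,s)$ that do not describe a ray. For the Backup Oracle I would use one of the existing $O(\log n)$ query-time, $O(n)$ preprocessing-time, $O(n)$-space ray-shooting structures for simple polygons \cite{cegghss94,hs95}; combined with the fact that the odds-on tree itself has size $O(n^\epsilon)$, this already yields the $O(n)$ space bound.

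What is left is an Interference Oracle that, given a simplex $\Delta\subseteq\R^4$, decides \emph{exactly} whether $\mathcal{P}$ is non-constant on $\Delta$ in $O(n)$ time. I would implement it in two steps. First, evaluate $e^*:=\mathcal{P}(r_0)$ at the centroid $r_0$ of $\Delta$ using the Backup Oracle, in $O(\log n)$ time. Second, verify that $\Delta\subseteq\mathcal{P}^{-1}(e^*)$. Since $\mathcal{P}^{-1}(e)$ is exactly the set of rays that meet edge $e$ and meet no other edge strictly earlier, its complement (within $\R^4$) is the union of $O(n)$ semialgebraic pieces of constant description complexity: the piece ``$r$ misses $e$'', together with, for each edge $e'\neq e$, the piece ``$r$ meets $e'$ strictly before $e$'', each piece being cut out by a constant number of constant-degree polynomial (in)equalities in the four coordinates of $r$ with the vertices of $P$ as constants. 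Testing each of these $O(n)$ pieces for disjointness from $\Delta$ costs $O(1)$ time, so the verification is $O(n)$ overall. The two exceptional values of $\mathcal{P}$ are handled the same way, since $\mathcal{P}^{-1}(\mathrm{miss})$ and $\{(s,s)\}$ also admit, respectively, $O(n)$- and $O(1)$-complexity semialgebraic descriptions. The oracle reports ``no interference'' precisely when the verification succeeds.

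The oracle is correct: if $\mathcal{P}$ is constant on $\Delta$ then, since $r_0\in\Delta$, its value is forced to be $\mathcal{P}(r_0)=e^*$ and the verification succeeds; conversely, if the verification succeeds then every point of $\Delta$ lies in $\mathcal{P}^{-1}(e^*)$ and so $\mathcal{P}$ is constant on $\Delta$. Exactness matters here: a test that is merely conservative (ever reporting interference where there is none) would leave some monochromatic $\poly(v)$ untrimmed and break the lower-bound argument of \lemref{lower-bound}, while a test that is merely optimistic would make the data structure return wrong answers. Feeding this $O(n)$-time oracle into \lemref{odds-on-tree-construction} with $\tau<\epsilon$, the odds-on tree has size $O(n^\epsilon)$ and is constructed using $O(n^\epsilon)$ time, $O(n^\epsilon)$ Sampling-Oracle calls, and $O(n^\epsilon\log^{O(1)}n)$ Interference-Oracle calls of $O(n)$ time each; after relabelling $\epsilon$ this is $O(n^{1+\epsilon})$ preprocessing and, together with the Backup Oracle, $O(n)$ total space. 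By \thmref{odds-on}, the expected time to answer a $D$-distributed ray-shooting query is $O(H^*+1)$ with $H^*\le\mu_D(T^*)$ for every linear decision tree $T^*$ that solves $\mathcal{P}$, so the structure is distribution-sensitive.

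I expect the main obstacle to be keeping the Interference Oracle \emph{exact} while staying within the $O(n)$-per-call budget: the naive ``test every edge against every other edge'' approach is $\Theta(n^2)$, and it is the ``evaluate $\mathcal{P}$ once at a representative ray, then do a single $O(n)$ verification sweep over the edges'' idea that brings the cost down; the remaining difficulty is the routine-but-finicky geometric bookkeeping for ties --- a ray grazing a reflex vertex, a ray collinear with an edge, the source lying on $\partial P$ --- all of which are confined to constant-complexity algebraic surfaces and hence do not affect the asymptotics.
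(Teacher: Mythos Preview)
Your proposal is correct and follows essentially the same route as the paper: apply \thmref{odds-on} with one of the known $O(n)$-space, $O(\log n)$-query ray-shooting structures as the Backup Oracle, and use a brute-force $O(n)$-time Interference Oracle so that the $O(n^\epsilon\log^{O(1)}n)$ oracle calls cost $O(n^{1+\epsilon})$ in total. The paper does not spell out a proof here; it simply invokes its earlier blanket remark that ``the Interference Oracle can be trivially implemented to run in $O(n)$ time by testing the query simplex $\Delta$ against each input element,'' whereas you give a concrete evaluate-at-a-representative-then-verify implementation and, additionally, make explicit (correctly) that the oracle must be \emph{exact} for the lower-bound argument of \lemref{lower-bound} to go through --- a point the paper leaves implicit.
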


The \emph{ray shooting in a $d$-dimensional polytope problem}
asks for the first point on the boundary of a convex polytope
$P\subseteq\R^d$ intersected by a query ray.  For polytopes in $\R^3$,
the Dobkin-Kirkpatrick hierarchy \cite{dk83} can perform ray shooting
in $O(\log n)$ time per query.

\begin{thm}
  There exists a distribution-sensitive data structure for the
  ray shooting in a $3$-dimensional polytope problem that uses
  $O(n^{1+\epsilon})$ preprocessing time and $O(n)$ space.
\end{thm}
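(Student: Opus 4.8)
The plan is to instantiate the framework of \thmref{odds-on} for this problem, exactly as in the preceding applications, using its constant-degree algebraic-decision-tree form (the lifting of \cite{yy85}). A ray in $\R^3$ is specified by its source together with one further point on it, so ray shooting in a $3$-dimensional polytope $P$ is a query problem $\mathcal{P}$ over $\R^6$; since the first intersection point with $\partial P$ is recovered in $O(1)$ time from the facet that contains it, I would take $\mathcal{P}(q)$ to be the first facet of $P$ met by the ray $q$ (with a distinguished value for ``miss''). For the Backup Oracle I would use the Dobkin--Kirkpatrick hierarchy \cite{dk83}, which ray-shoots in $P$ in $O(\log n)$ time, reporting the hit facet, after $O(n)$ preprocessing and within $O(n)$ space; the same structure will double as a subroutine during construction.

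The core task is the Interference Oracle. I would first observe that the constancy cells of $\mathcal{P}$ in $\R^6$ are refined by the arrangement of only $O(n)$ bounded-degree (in fact at most quadratic) surfaces: for each edge $e$ of $P$, the quadric $\sigma_e$ of rays whose supporting line meets $\mathrm{aff}(e)$, and for each facet $f$, the hyperplane $\sigma_f$ of rays whose source lies on $\mathrm{aff}(f)$. Indeed, as a ray varies continuously, its first-hit facet can change only when the first-hit point sweeps across an edge of $P$, or when the source crosses $\partial P$, or at the hit/miss boundary (where the ray grazes an edge of $P$) --- each of which forces the ray onto one of the $\sigma$'s. Consequently $\mathcal{P}$ is solved by an algebraic decision tree of constant degree, so \thmref{odds-on} yields expected query time $O(\mu_D(T^*)+1)$ for any such tree $T^*$. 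To answer an Interference query on a simplex $\Delta\subseteq\R^6$, I would scan the $O(n)$ surfaces $\sigma$, test in $O(1)$ time whether $\sigma$ meets $\Delta$, and, when it does, compare $\mathcal{P}$ --- evaluated through the Dobkin--Kirkpatrick backup, $O(\log n)$ per evaluation, with standard handling of degenerate configurations --- at two points of $\Delta$ lying just on either side of $\sigma$, declaring $\mathcal{P}$ non-constant on $\Delta$ iff some such pair disagrees. This test is exact: if $p,q\in\Delta$ have $\mathcal{P}(p)\neq\mathcal{P}(q)$ then the segment $pq\subseteq\Delta$ crosses some $\sigma$ at a point where the value changes, and conversely any reported disagreement is genuine. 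Each Interference query thus costs $O(n\log n)$.

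It then remains to assemble the pieces. By \lemref{odds-on-tree-construction} applied with $\tau<\epsilon$, the odds-on tree has size $O(n^\tau)=O(n)$ and is built in $O(n^\tau\log^{O(1)}n)$ time plus $O(n^\tau)$ Sampling-Oracle calls and $O(n^\tau\log^{O(1)}n)$ Interference-Oracle calls; charging $O(n\log n)$ to each Interference call and $O(n)$ to building the Dobkin--Kirkpatrick hierarchy, the total preprocessing time is $O(n^{1+\tau}\log^{O(1)}n)=O(n^{1+\epsilon})$ and the total space (the tree plus the backup hierarchy) is $O(n)$. Together with the backup structure, answering a query then takes expected time $O(\mu_D(T^*)+1)$ for every constant-degree algebraic decision tree $T^*$ that solves $\mathcal{P}$; since $\mathcal{P}$ is solved by such a tree, this makes the structure distribution-sensitive, as claimed.

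I expect the Interference Oracle to be the main obstacle. One has to verify carefully that the first-hit map really decomposes over $O(n)$ bounded-degree surfaces --- in particular handling the discontinuity when the source crosses $\partial P$ and the hit/miss boundary correctly --- and, more delicately, one must ensure that the ``compare the two sides of each surface'' procedure gives an \emph{exact} test of constancy rather than a merely conservative one: the lower bound in \lemref{lower-bound} fails the moment a non-terminal node is permitted to have a monochromatic cell, so a conservative oracle would not suffice.
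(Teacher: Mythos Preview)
The paper's argument here is essentially a one-liner: combine the Dobkin--Kirkpatrick hierarchy (as the Backup Oracle) with \thmref{odds-on}, invoking the blanket remark at the start of \secref{applications} that the Interference Oracle runs in $O(n)$ time ``by testing the query simplex $\Delta$ against each input element.'' Your plan follows the same outline and arrives at the same bounds, but you try to spell out the Interference Oracle, and that is where there is a genuine gap.

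Your test --- for each surface $\sigma$ that meets $\Delta$, sample one pair of points just across $\sigma$ and compare their $\mathcal{P}$-values --- is not exact, and the failure mode is fatal: a false ``constant'' verdict mislabels a terminal leaf and makes the structure return wrong answers, not merely weaken the lower bound. The surfaces $\sigma_e=\{\text{rays whose supporting line meets }\mathrm{aff}(e)\}$ strictly over-approximate the true constancy boundary, which is only the locus of rays whose \emph{first-hit point} lies on $e$ itself. Hence $\sigma_e\cap\Delta$ can be large, and at the particular spot you sample, crossing $\sigma_e$ may leave $\mathcal{P}$ unchanged (because the ray's first hit is nowhere near $e$) even though elsewhere in $\sigma_e\cap\Delta$ it does change. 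Your justification (``the segment $pq$ crosses some $\sigma$ at a point where the value changes'') shows only that \emph{some} crossing point of \emph{some} $\sigma$ inside $\Delta$ witnesses the change --- not that the single pair you happen to test for that $\sigma$ will. A fix closer to the paper's intended one-per-input-element test is: pick any $q_0\in\Delta$, compute $f_0=\mathcal{P}(q_0)$ via the hierarchy, and then verify $\Delta\subseteq\mathcal{P}^{-1}(f_0)$ directly. The region $\mathcal{P}^{-1}(f_0)$ is described by $O(n)$ constant-degree polynomial inequalities in $(s,d)$ (one per facet of $P$, comparing hit parameters, together with the inside/outside status of the source), and in fixed dimension each such inequality can be certified over the simplex $\Delta$ in $O(1)$ time; this yields an exact $O(n)$-time Interference Oracle and the stated $O(n^{1+\epsilon})$ preprocessing bound.
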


Schwarzkopf \cite{s92} gives an $O(n^{\floor{d/2}+\epsilon})$ space,
$O(\log n)$ query time solution for the problem of ray shooting in a
$d$-dimensional polytope with $d \ge 4$.

\begin{thm}
  There exists a distribution-sensitive data structure for the
  ray shooting in a $d$-dimensional polytope problem that uses
  $O(n^{\floor{d/2}+\epsilon})$ preprocessing time and space.
\end{thm}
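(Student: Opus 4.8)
The plan is to instantiate the odds-on tree of \thmref{odds-on} (or, to account for the degree of the comparisons involved, \corref{odds-on-algebraic}) using Schwarzkopf's structure as the Backup Oracle, so that essentially all the work is in describing the Interference Oracle. First I would represent a query ray in $\R^d$ by the pair formed by its source and a second point on the ray, so that ray shooting in a convex polytope $P\subseteq\R^d$ becomes a query problem $\mathcal{P}:\R^{2d}\rightarrow A$, where $A$ is the set of facets of $P$ together with a symbol $\bot$ denoting ``the ray misses $P$'' and $\mathcal{P}(\rho)$ is the first facet of $P$ hit by $\rho$. Schwarzkopf's data structure \cite{s92} uses $O(n^{\floor{d/2}+\eps})$ space and preprocessing time and reports, for any query ray, the first point (and hence the first facet) of $P$ hit in $O(\log n)$ time; this is precisely the Backup Oracle.

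The substantive step is the Interference Oracle: given a simplex $\Delta\subseteq\R^{2d}$, decide whether $\mathcal{P}$ is constant on $\Delta$. Here I would use the fact that, for a convex polytope, the first facet hit by a ray is determined by a constant number of sign tests on $O(1)$-size polynomials in the ray's $2d$ coordinates: on which side of each facet hyperplane the source lies (linear tests) and the outcomes of the $\arg\max$, respectively $\arg\min$, over the ray's entry, respectively exit, parameters with respect to the facet hyperplanes --- each such parameter being a ratio of linear forms, so that each comparison is the sign of a quadratic. Consequently $\mathcal{P}$ is piecewise constant with respect to the arrangement of the surfaces ``the ray passes through the $(d-2)$-flat $\mathrm{aff}(F_i)\cap\mathrm{aff}(F_j)$'' (over facets $F_i,F_j$ sharing a ridge) and ``the source lies on $\mathrm{aff}(F_i)$'', each of degree at most two; these are exactly the events at which the first facet hit can change, and since a $d$-polytope with $n$ facets has $O(n^{\floor{d/2}})$ faces, there are $O(n^{\floor{d/2}})$ of them. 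Hence all rays of $\Delta$ receive the same answer if and only if no such surface meets the relative interior of $\Delta$, a conjunction of $O(n^{\floor{d/2}})$ constant-size semialgebraic feasibility tests, which I can decide in $O(n^{\floor{d/2}})$ time.

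Finally I would take $\tau=\eps/2$, apply \lemref{odds-on-tree-construction} to build the odds-on tree in $O(n^{\eps})$ time using $O(n^{\eps})$ Sampling-Oracle calls and $O(n^{\eps})$ Interference-Oracle calls (contributing $O(n^{\floor{d/2}+\eps})$ time in total), and add the $O(n^{\floor{d/2}+\eps})$ cost of preprocessing Schwarzkopf's structure, for the stated preprocessing time and space. By \corref{odds-on-algebraic} with $p=2$ (or \thmref{odds-on} in the linear decision tree model), a ray drawn from $D$ is then answered in expected time $O(\mu_D(T^*)+1)$ for any degree-$2$ algebraic --- in particular any linear --- decision tree $T^*$ solving $\mathcal{P}$, so the structure is distribution-sensitive. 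The step I expect to be the main obstacle is making the Interference Oracle \emph{exact}, as the definition requires: one must check that the arrangement above is reduced with respect to $\mathcal{P}$, so that ``$\Delta$ misses every surface'' is not merely sufficient but also necessary for $\mathcal{P}$ to be constant on $\Delta$, pruning surfaces that become spurious for non-simple polytopes; if this becomes delicate, one can instead compute the $\mathcal{P}$-partition of $\Delta$ directly in $n^{O(1)}$ time, which for $d\ge 4$ still fits within $O(n^{\floor{d/2}+\eps})$.
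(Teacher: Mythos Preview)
Your proposal is correct and follows the same overall route as the paper: represent rays as points in $\R^{2d}$, use Schwarzkopf's $O(n^{\floor{d/2}+\epsilon})$ structure as the Backup Oracle, and instantiate \thmref{odds-on}. The paper does not spell out a proof for this theorem at all; it simply cites Schwarzkopf and relies on the generic remarks at the start of \secref{applications} that the Interference Oracle can be implemented ``trivially'' in polynomial time, so that its $O(n^{\epsilon})$ invocations are absorbed by the $O(n^{\floor{d/2}+\epsilon})$ budget (recall $d\ge 4$). Your write-up is considerably more explicit about the Interference Oracle, identifying the change loci in ray space with ridges of $P$ and source/facet incidences, and bounding their number by the Upper Bound Theorem; this gives a concrete $O(n^{\floor{d/2}})$-time oracle rather than an unspecified polynomial one, which is a nice sharpening. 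One small remark: the theorem as stated is about distribution-sensitivity in the \emph{linear} decision tree model, so invoking \thmref{odds-on} directly suffices and there is no need to pass to \corref{odds-on-algebraic}; your exactness worry is legitimate in principle, but your own observation that the $\mathcal{P}$-partition of $\Delta$ can be computed directly (e.g., via your ridge-surface arrangement restricted to $\Delta$) in time polynomial in $n$ already handles it within the stated bound.
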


\subsection{Orthogonal Range Counting}

The \emph{2-d orthogonal range counting problem} is the problem
of counting the number of points of a data set $S\subset\R^2$ that
are contained in a query rectangle $[q_1,q_2]\times[q_3,q_4]$. (Note
that this produces a query point $q\in \R^4$.) Bentley's range trees
\cite{b75}, with fractional cascading \cite{cg86,l78}, yield an $O(n\log
n)$ preprocessing time and space, $O(\log n)$ query time data structure
for this problem.  Interference Oracle queries for 2-d orthogonal
range counting require determining if any rectangle represented by a
point in a 4-dimensional simplex has some point of $S$ on its boundary.
This problem can be decomposed into $O(1)$ simplex range counting queries
in $\R^4$ in a manner similar to that used for the point location problem.
Thus, an Interference Oracle for this problem can be implemented in
$O(n^{3/4+\epsilon})$ time after $O(n\log n)$ preprocessing.

\begin{thm}\thmlabel{d}
  There exists a distribution-sensitive data structure for the 2-d
  orthogonal range counting problem that uses $O(n\log n)$ preprocessing
  time and $O(n\log n)$ space.
\end{thm}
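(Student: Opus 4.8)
The plan is to instantiate the odds-on tree machinery of \thmref{odds-on} for the query problem $\mathcal{P}:\R^4\rightarrow\Z$ that sends a query point $q=(q_1,q_2,q_3,q_4)$ to $|S\cap([q_1,q_2]\times[q_3,q_4])|$. By \thmref{odds-on} it suffices to exhibit three oracles for $\mathcal{P}$ and $D$. The Sampling Oracle for $D$ is available by hypothesis. For the Backup Oracle we take Bentley's range tree \cite{b75} with fractional cascading \cite{cg86,l78}, which answers a range counting query in $O(\log n)$ time and uses $O(n\log n)$ space after $O(n\log n)$ preprocessing. For the Interference Oracle we use exactly the structure described immediately above: given a query simplex $\Delta\subseteq\R^4$, deciding whether $\mathcal{P}$ is non-constant on $\Delta$ reduces to deciding whether the boundary of some axis-aligned rectangle represented by a point of $\Delta$ passes through a point of $S$, and this reduces in turn to $O(1)$ simplex range counting queries in $\R^4$, answered by Matou\v{s}ek's partition trees \cite{m92} in $O(n^{3/4+\epsilon'})$ time after $O(n\log n)$ preprocessing and $O(n)$ space.

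Before assembling costs I would check correctness of this Interference Oracle. Since each region $\poly(v)$ tested during the construction is convex, hence connected, and since the count $|S\cap([q_1,q_2]\times[q_3,q_4])|$ is piecewise constant in $q$ and changes value only where an edge of the query rectangle sweeps across a point of $S$, a routine argument shows that if no such incidence occurs for any rectangle represented by a point of $\poly(v)$, then $\mathcal{P}$ is constant on $\poly(v)$ and $v$ may legitimately be made a terminal leaf. Declaring interference whenever some incidence is detected is at worst conservative, never incorrect, and conservativeness is harmless: the query-time and lower-bound guarantees of \thmref{odds-on} flow through \lemref{lower-bound}, which bounds $H(\Xi(L))$ for the tree that is actually produced, independently of how aggressively nodes were trimmed. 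The only technical nuisance is the treatment of degenerate query rectangles (for instance $q_1>q_2$), which is absorbed into the same reduction with $O(1)$ extra sign tests.

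Then I would put the pieces together. Apply \thmref{odds-on} with a constant $\epsilon>0$ small enough that $\epsilon+\epsilon'<\tfrac{1}{4}$. The odds-on tree has size $O(n^\epsilon)$ and is built in $O(n^\epsilon)$ time using $O(n^\epsilon)$ Sampling Oracle calls and $O(n^\epsilon)$ Interference Oracle calls; the latter cost $O(n^\epsilon)\cdot O(n^{3/4+\epsilon'})=o(n)$ time in total. Adding the $O(n\log n)$ preprocessing for the range tree and for the partition-tree-based Interference Oracle, the total preprocessing time is $O(n\log n)$, and the total space is $O(n\log n)+O(n)+O(n^\epsilon)=O(n\log n)$. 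By \thmref{odds-on}, the resulting structure, together with the Backup Oracle, answers a query drawn from $D$ in $O(H^*+1)=O(\mu_D(T^*)+1)$ expected time for every linear decision tree $T^*$ that solves $\mathcal{P}$, which is precisely the asserted distribution-sensitivity.

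The step requiring genuine care is the Interference Oracle reduction, since everything else is bookkeeping against \thmref{odds-on}. Following the template used for planar point location earlier in this section, one must write the predicate ``an edge of the query rectangle contains a point of $S$'' as a Boolean combination of linear inequalities in $q_1,q_2,q_3,q_4$ with the coordinates of the point of $S$ fixed, so that each point of $S$ becomes a point in a bounded-dimensional space and $\Delta$ maps to a constant number of simplices over which partition-tree range counting applies; verifying that $O(1)$ such queries suffice per call, and handling the boundary cases, is the one part of the argument that is not purely mechanical.
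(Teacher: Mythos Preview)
Your proposal is correct and follows essentially the same approach as the paper, whose ``proof'' is the discussion immediately preceding the theorem: range trees with fractional cascading as the Backup Oracle, and the boundary-incidence test reduced to $O(1)$ simplex range-counting queries in $\R^4$ as the Interference Oracle, assembled via \thmref{odds-on}.

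One caveat on a side remark: your assertion that a conservative Interference Oracle is harmless because \lemref{lower-bound} ``bounds $H(\Xi(L))$ for the tree that is actually produced'' is not quite right. The proof of \lemref{lower-bound} uses, essentially, that every leaf $v$ of $T'$ has $\mathcal{P}$ non-constant on $\poly(v)$ (this is what forces a shallow leaf $w$ of $T^*$ to meet few elements of $\poly(G_{i,j})$, since $\poly(w)$ cannot contain any such $\poly(v)$). If the oracle over-declares interference, you could have non-terminal $v$ with $\mathcal{P}$ constant on $\poly(v)$, and that step fails. Fortunately this does not bite here: on a full-dimensional simplex $\Delta$, a point of $S$ lying on the boundary of some rectangle in $\Delta$ can always be pushed strictly inside or outside by a small perturbation within $\Delta$, so the boundary-incidence test is exact rather than merely conservative, and the paper's (and your) reduction is sound as stated.
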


Range trees satisfy the constraints of the comparison tree model of
computation, which is considerably weaker than the linear decision tree
model. Dujmovi\'c \etal\ \cite{dhm09} give a distribution-sensitive 2-d
orthogonal range counting data structure that works in the comparison
tree model.  However, their technique can only answer \emph{2-sided
queries}, i.e., queries of the form $[q_1,\infty)\times[q_2,\infty)$.

Filter trees can be made to work in the comparison tree model
and handle full (4-sided) 2-d orthogonal range counting queries.
The key modification required is the use of $k$-d trees in
\thmref{prob-partition-tree} rather than Matou\v{s}ek's partition
trees. (This method, in 2 dimensions, is essentially how Dujmovi\'c \etal\
obtain their results.)

In the comparison tree model, an Interference Oracle query is a
$d$-dimensional box, rather than a simplex. In the special case of 2-d
orthogonal range counting, Interference Oracle queries can be reduced to
a constant number of rectangular range counting queries among the input
set $S$.  Therefore, Interference Oracle queries can, in this case,
be answered in $O(\log n)$ time after $O(n\log n)$ preprocessing using
range trees.

\begin{thm}\thmlabel{last-app}
  There exists a distribution-sensitive data structure \emph{in the
  comparison tree model} for the 2-d orthogonal range counting problem
  that uses $O(n\log n)$ preprocessing time and space.
\end{thm}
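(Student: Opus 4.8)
The plan is to re-run the development of Sections~\ref{sec:data-structure} and~\ref{sec:analysis} almost verbatim, making two coordinated substitutions: replace Matou\v{s}ek's partition trees (\thmref{point-partition}) by $k$-d trees, and replace every occurrence of the word ``hyperplane'' by ``axis-parallel hyperplane'' (in the definition of $k$-general position, in Condition~2 of \thmref{point-partition-tree} and \thmref{prob-partition-tree}, and in \lemref{independent} and \lemref{lower-bound}). The first thing I would do is record the partition-tree-like properties of a $k$-d tree in $\R^{d}$ (here $d=4$): starting from a cell with $m'$ sample points and performing $\log_2 r$ levels of median splits that cycle through the coordinate axes (taking $r$, a constant of our choosing, to be a power of two) produces $r$ pairwise-disjoint, box-shaped cells, each holding at most $2m'/r$ of the points, such that every axis-parallel hyperplane crosses the interiors of at most $r^{1-1/d}$ of them --- the hyperplane $\{x_c=t\}$ passes to both children only at the $(1-1/d)$-fraction of split levels that cut some coordinate $\neq c$, and to a single child at the remaining levels. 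Taking this statement in place of \thmref{point-partition}, the partition tree and sample partition tree are defined exactly as before, and \thmref{point-partition-tree} and \thmref{prob-partition-tree} hold with Condition~2 quantified over axis-parallel hyperplanes only. Since $k$-d tree cells are disjoint we now have $\Xi(v)=\poly(v)$, a box; the proof of \thmref{prob-partition-tree} carries over unchanged (in fact it simplifies), because the relevant set class is now just axis-aligned boxes in $\R^{d}$, which have Vapnik--Chervonenkis dimension $2d$, so Sauer's lemma and the Vapnik--Chervonenkis-type inequality used there give the same $O(e^{-m^{1/2}})$ failure bound.

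Next I would transcribe \lemref{odds-on-tree-construction}, \lemref{independent}, and \lemref{lower-bound}. In \lemref{odds-on-tree-construction} the trimming step becomes simpler: $\poly(v)$ is already a box and the comparison-model Interference Oracle takes a box, so each node costs a single Interference call and no bottom-vertex triangulation is needed, giving $O(n^{\tau})$ Interference calls. \lemref{independent} goes through with ``hyperplane'' read as ``axis-parallel hyperplane'': the test set $L$ now consists of the axis-parallel hyperplanes supporting the at most $2d$ facets of the boxes $\poly(v)$, $v\in V$, so $|L|=O(r^{i}d)$, still polynomial in $r^{i}$, and the Bernoulli-sampling argument yields $V'\subseteq V$ whose boxes are pairwise disjoint and in $k$-general position with respect to axis-parallel hyperplanes, with $|V'|=\Omega(|V|/r^{i(d/k+1-1/d+\delta)})$. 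In \lemref{lower-bound}, $T^*$ is now a comparison tree, so a leaf $w$ at depth $i$ has $\poly(w)$ equal to the intersection of at most $i$ axis-parallel halfspaces; because $\poly(G_{i,j})$ is in $k$-general position with respect to axis-parallel hyperplanes, $\poly(w)$ meets at most $ik$ elements of $\poly(G_{i,j})$ (otherwise $\poly(w)$ would contain $\poly(v)$ for a non-terminal $v\in G_{i,j}$, contradicting that $T^*$ solves $\mathcal{P}$), and the rest of the entropy bookkeeping --- grouping $L'$ by probability, subgrouping via \lemref{independent}, the per-subgroup application of \thmref{shannon}, and the bound on $\bar H$ --- is identical. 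This produces the comparison-model analogue of \thmref{odds-on}: for any constant $\eps>0$, an odds-on tree of size $O(n^{\eps})$ can be built using $O(n^{\eps})$ Sampling-Oracle and $O(n^{\eps})$ Interference-Oracle calls so that, together with the Backup Oracle, it answers queries drawn from $D$ in $O(H^*+1)$ expected time, where $H^*\le\mu_D(T^*)$ for every comparison tree $T^*$ that solves $\mathcal{P}$; moreover the search procedure uses only coordinate comparisons, so it lies in the comparison tree model.

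Finally I would instantiate this for 2-d orthogonal range counting, a query problem over $\R^{4}$ (so $d=4$). The Backup Oracle is a range tree with fractional cascading \cite{b75,cg86,l78}, with $O(n\log n)$ preprocessing time and space and $O(\log n)$ query time; the Interference Oracle, as noted just before the theorem statement, reduces a query box to a constant number of orthogonal range counting queries over $S$ and so runs in $O(\log n)$ time after $O(n\log n)$ preprocessing, again via range trees. Taking $\tau=\eps$ small, the odds-on tree has size $O(n^{\eps})$ and is built in $O(n^{\eps})$ time plus $O(n^{\eps})$ oracle calls, so the total preprocessing time is $O(n\log n)+O(n^{\eps}\log n)=O(n\log n)$, the total space is $O(n\log n)$, and the expected query time is $O(H^*+1)=O(\mu_D(T^*)+1)$ for any comparison tree $T^*$ --- exactly distribution-sensitivity in the comparison tree model, and, unlike \cite{dhm09}, for full $4$-sided queries. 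The one point needing genuine care rather than mechanical transcription is the alignment of the two substitutions: one must check that the crossing-number guarantee of a $k$-d tree is precisely a bound for axis-parallel hyperplanes, and that every hyperplane the lower-bound proof actually invokes --- the facet hyperplanes of the boxes $\poly(v)$ in \lemref{independent} and the bounding hyperplanes of the comparison-tree cells $\poly(w)$ in \lemref{lower-bound} --- is axis-parallel; once this is verified, the constants and estimates are exactly those already carried out in Sections~\ref{sec:data-structure}--\ref{sec:analysis}.
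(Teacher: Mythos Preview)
Your proposal is correct and follows exactly the approach the paper itself indicates: replace Matou\v{s}ek's partition trees by $k$-d trees so that Condition~2 holds for axis-parallel hyperplanes, observe that cells are boxes (so $\Xi(v)=\poly(v)$ and Interference calls take boxes directly), and instantiate with range trees for both the Backup and Interference Oracles. The paper merely sketches this in the two paragraphs preceding the theorem, whereas you have carefully traced the substitution through \thmref{prob-partition-tree}, \lemref{independent}, and \lemref{lower-bound}; the content is the same.
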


\section{Summary and Conclusions}
\seclabel{conclusions}

We have presented a data structure --- the odds-on tree --- that can
be added on to any data structure that answers queries drawn from some
distribution $D$ over $\R^d$.  If the underlying data structure has query
time $O(\log n)$, then the combined data structure will have optimal
expected query time in the linear decision tree model.  A variant of
the odds-on tree based on $k$-d trees provides a similar result in the
comparison tree model.  \secref{applications} contains a smattering of
applications of the odds-on tree.  Many more are possible.

Note that \corref{odds-on-algebraic} applies to all of the problems
in Theorems~\ref{thm:first-app}--\ref{thm:last-app} to yield, for
any constant $p$, data structures that are distribution-sensitive
in the degree $p$ algebraic decision tree model.  However, when
applying \corref{odds-on-algebraic}, the Interference Oracle becomes
considerably more complicated, so that the fast preprocessing times in
Theorems~\ref{thm:a}, \ref{thm:b}, \ref{thm:c}, and \ref{thm:d} increase
to $O(n^{1+\epsilon})$.
However, the real power of \corref{odds-on-algebraic} is its applications
to problems that cannot be solved by finite linear decision trees.
Examples of such problems include fixed-radius circular ray shooting in
polygons \cite{cceo04}, point-location in 2-dimensional power diagrams
\cite{a87}, point-location in planar subdivisions whose edges are defined
by algebraic curves (such as arrangements of circles), and many others.

We believe that the odds-on tree may actually be practical in some
settings.  The comparison-tree version of the odds-on tree is based on
$k$-d trees, which are simple and widely used in practice. At worst,
a comparison-based odds-on tree will perform $O(\epsilon\log n)$ extra
comparisons before falling back to the backup data structure.

In low dimensions, simpler alternatives to Matou\v{s}ek's Partition
Theorem (\thmref{point-partition}) are available and may be more
practical.  For example, in 2-d one can, $O(n)$ time, find two lines
that partition any $m$ point set into four point sets each of size at
most $\ceil{m/4}$ and such that no line intersects more than the 3 of
the resulting sets \cite{m85}. This result is strong enough that it can
be used in place of \thmref{point-partition} to prove all our results
(for 2-d problems) and yields an odds-on tree that only requires 2
point-line comparisons at each node.

Because the odds-on tree is so small (of size $O(n^\epsilon)$) it
may be useful to speed up searching in environments where memory is
constrained. For example, it can be applied to the succinct point
location data structures of Bose \etal\ \cite{bchmm09} to obtain a
distribution-sensitive and succinct data structure for point location.
In external-memory settings, an odds-on tree may also be useful as a
filter that is sufficiently small to fit into internal memory while the
backup structure lives in (larger but slower) external memory.

\bibliographystyle{plain}
\bibliography{odds-on}

\begin{thebibliography}{10}

\bibitem{as98}
U.~Adamy and R.~Seidel.
\newblock On the exact worst case query complexity of planar point location.
\newblock In {\em Proceedings of the Ninth Annual ACM-SIAM Symposium on
  Discrete Algorithms}, pages 609--618, 1998.

\bibitem{ae99}
P.~K. Agarwal and J.~Erickson.
\newblock Geometric range searching and its relatives.
\newblock In B.~Chazelle, J.~E. Goodman, and R.~Pollack, editors, {\em Advances
  in Discrete and Computational Geometry}, volume 223 of {\em Contemporary
  Mathematics}, pages 1--56. American Mathematical Society Press, 1999.

\bibitem{as08}
N.~Alon and J.~H. Spencer.
\newblock {\em The Probabilistic Method}.
\newblock John Wiley {\&} Sons, Hoboken, third edition, 2008.

\bibitem{acmr00}
S.~Arya, S.~W. Cheng, D.~M. Mount, and H.~Ramesh.
\newblock Efficient expected-case algorithms for planar point location.
\newblock In {\em Proceedings of the Seventh Scandinavian Workshop on Algorithm
  Theory}, pages 353--366, 2000.

\bibitem{amm00}
S.~Arya, T.~Malamatos, and D.~M. Mount.
\newblock Nearly optimal expected-case planar point location.
\newblock In {\em Proceedings of the 41st Annual Symposium on Foundations of
  Computer Science}, pages 208--218, 2000.

\bibitem{amm01a}
S.~Arya, T.~Malamatos, and D.~M. Mount.
\newblock Entropy-preserving cuttings and space-efficient planar point
  location.
\newblock In {\em Proceedings of the Twelfth Annual ACM-SIAM Symposium on
  Discrete Algorithms}, pages 256--261, 2001.

\bibitem{amm01b}
S.~Arya, T.~Malamatos, and D.~M. Mount.
\newblock A simple entropy-based algorithm for planar point location.
\newblock In {\em Proceedings of the Twelfth Annual ACM-SIAM Symposium on
  Discrete Algorithms}, pages 262--268, 2001.

\bibitem{ammw07}
S.~Arya, T.~Malamatos, D.~M. Mount, and K.~C. Wong.
\newblock Optimal expected-case planar point location.
\newblock {\em SIAM Journal on Computing}, 37(2):584--610, 2007.

\bibitem{a87}
Franz Aurenhammer.
\newblock Power diagrams: Properties, algorithms and applications.
\newblock {\em SIAM Journal on Computing}, 16(1):78--96, 1987.

\bibitem{b75}
J.~L. Bentley.
\newblock Multidimensional binary search trees used for associative searching.
\newblock {\em Communications of the ACM}, 18:509--517, 1975.

\bibitem{bchmm09}
P.~Bose, E.~Chen, M.~He, A.~Maheshwari, and P.~Morin.
\newblock Succinct geometric indexes supporting point location.
\newblock {\em ACM Transactions on Algorithms}, 2009.
\newblock To appear. Preliminary version appears in \emph{Proceedings of the
  20th ACM-SIAM Symposium on Discrete Algorithms (SODA 2009)}, pages 635-644,
  2009.

\bibitem{cegghss94}
B.~Chazelle, H.~Edelsbrunner, M.~Grigni, L.~J. Guibas, J.~Hershberger,
  M.~Sharir, and J.~Snoeyink.
\newblock Ray shooting in polygons using geodesic triangulations.
\newblock {\em Algorithmica}, 12(1):54--68, 1994.

\bibitem{cg86}
B.~Chazelle and L.~J. Guibas.
\newblock Fractional cascading: {I}. a data structuring technique.
\newblock {\em Algorithmica}, 1:133--162, 1986.

\bibitem{cceo04}
Siu-Wing Cheng, Otfried Cheong, Hazel Everett, and Ren{\'e} van Oostrum.
\newblock Hierarchical decompositions and circular ray shooting in simple
  polygons.
\newblock {\em Discrete {\&} Computational Geometry}, 32(3):401--415, 2004.

\bibitem{c88}
K.~L. Clarkson.
\newblock Applications of random sampling in computational geometry, {II}.
\newblock In {\em Proceedings of the Fourth ACM Symposium on Computational
  Geometry}, Urbana, Illinois, June 1988.

\bibitem{c88postoffice}
K.~L. Clarkson.
\newblock A randomized algorithm for closest-point queries.
\newblock {\em SIAM Journal on Computing}, 17(4):830--847, 1988.

\bibitem{cdilm08}
S.~Collette, V.~Dujmovi\'c, J.~Iacono, S.~Langerman, and P.~Morin.
\newblock Distribution-sensitive point location in convex subdivisions.
\newblock In {\em Proceedings of the 19th ACM-SIAM Symposium on Discrete
  Algorithms (SODA~2008)}, pages 912--921, 2008.

\bibitem{cdilm09}
S.~Collette, V.~Dujmovi\'c, J.~Iacono, S.~Langerman, and P.~Morin.
\newblock Entropy, triangulation, and point location in planar subdivisions.
\newblock Technical Report cs0905.3584, arXiv, March 2009.

\bibitem{d82}
L.~Devroye.
\newblock Bounds for the uniform deviation of empirical measures.
\newblock {\em Journal of Multivariate Analysis}, 12:72--79, 1982.

\bibitem{dl01}
L.~Devroye and G.~Lugosi.
\newblock {\em Combinatorial Methods in Density Estimation}.
\newblock Springer-Verlag, New York, 2001.

\bibitem{d50}
R.~P. Dilworth.
\newblock A decomposition theorem for partially ordered sets.
\newblock {\em Annals of Mathematics}, 51:161--166, 1950.

\bibitem{dk83}
D.~P. Dobkin and D.~G. Kirkpatrick.
\newblock Fast detection of polyhedral intersection.
\newblock {\em Theoretical Computer Science}, 27:241--253, 1983.

\bibitem{dhm09}
V.~Dujmovi\'c, J.~Howat, and P.~Morin.
\newblock Biased range trees.
\newblock In {\em Proceedings of the 20th ACM-SIAM Symposium on Discrete
  Algorithms (SODA~2009)}, pages 486--495, 2009.

\bibitem{e93}
J.~Eckhoff.
\newblock {H}elly, {R}adon, and {C}aratheodory type theorems.
\newblock In P.~M. Gruber and J.~M. Wills, editors, {\em Handbook of Convex
  Geometry}, chapter 2.1, pages 389--448. North-Holland, Amsterdam,
  Netherlands, 1993.

\bibitem{egs86}
H.~Edelsbrunner, L.~J. Guibas, and J.~Stolfi.
\newblock Optimal point location in a monotone subdivision.
\newblock {\em SIAM Journal on Computing}, 15(2):317--340, 1986.

\bibitem{g00chapter}
M.~T. Goodrich.
\newblock Geometric data structures.
\newblock In J.-R. Sack and J.~Urrutia, editors, {\em Handbook of Computational
  Geometry}, chapter~10. Elsevier, Amsterdam, Netherlands, 2000.

\bibitem{hs95}
J.~Hershberger and S.~Suri.
\newblock A pedestrian approach to ray shooting: Shoot a ray, take a walk.
\newblock {\em Journal of Algorithms}, 18(3):403--431, 1995.

\bibitem{i01}
J.~Iacono.
\newblock Optimal planar point location.
\newblock In {\em Proceedings of the Twelfth Annual ACM-SIAM Symposium on
  Discrete Algorithms}, pages 240--241, 2001.

\bibitem{i04}
J.~Iacono.
\newblock Expected asymptotically optimal planar point location.
\newblock {\em Computational Geometry Theory and Applications}, 29(1):19--22,
  2004.

\bibitem{k83}
D.~Kirkpatrick.
\newblock Optimal search in planar subdivisions.
\newblock {\em SIAM Journal on Computing}, 12(1):28--35, 1983.

\bibitem{l04}
D.~Liu.
\newblock A note on point location in arrangements of hyperplanes.
\newblock {\em Information Processing Letters}, 90(2):93--95, 2004.

\bibitem{l78}
G.~S. Luecker.
\newblock A data structure for orthogonal range queries.
\newblock In {\em Proceedings of the 19th Annual IEEE Symposium on Foundations
  of Computer Science}, pages 28--34, 1978.

\bibitem{m92}
J.~Matou{\v{s}}ek.
\newblock Efficient partition trees.
\newblock {\em Discrete {\&} Computational Geometry}, 8(3):315--334, 1992.

\bibitem{m70}
P.~McMullen.
\newblock The maximum number of faces of a convex polytope.
\newblock {\em Mathematika}, 17:179--184, 1970.

\bibitem{m85}
N.~Megiddo.
\newblock Partitioning with two lines in the plane.
\newblock {\em Journal of Algorithms}, 6(3):430--433, 1985.

\bibitem{ms05}
D.~P. Mehta and S.~Sahni, editors.
\newblock {\em Handbook of Data Structures and Applications}, chapter 16--27.
\newblock CRC Press, Boca Raton, Florida, 2005.

\bibitem{m90}
K.~Mulmuley.
\newblock A fast planar partition algorithm.
\newblock {\em Journal of Symbolic Computation}, 10:253--280, 1990.

\bibitem{ps85}
F.~P. Preparata and M.~I. Shamos.
\newblock {\em Computational Geometry: An Introduction}.
\newblock Springer-Verlag, New-York, 1985.

\bibitem{st86}
N.~Sarnak and R.~E. Tarjan.
\newblock Planar point location using persistent search trees.
\newblock {\em Communications of the ACM}, 29(7):669--679, 1986.

\bibitem{s72}
N.~Sauer.
\newblock On the density of families of sets.
\newblock {\em Journal of Combinatorial Theory Series A}, 13:145--147, 1972.

\bibitem{s92}
O.~Schwarzkopf.
\newblock Ray shooting in convex polytopes.
\newblock In {\em Proceedings of the Eighth ACM Symposium on Computational
  Geometry}, pages 286--295, 1992.

\bibitem{s48}
C.~E. Shannon.
\newblock A mathematical theory of communication.
\newblock {\em Bell Systems Technical Journal}, pages 379--423 and 623--656,
  1948.

\bibitem{vc71}
V.~N. Vapnik and A.~Ya. Chervonenkis.
\newblock On the uniform convergence of relative frequencies of events to their
  probabilities.
\newblock {\em Theory of Probability and its Applications}, 16:264--280, 1971.

\bibitem{yy85}
A.~C.-C. Yao and F.~F. Yao.
\newblock A general approach to {$d$}-dimensional geometric queries (extended
  abstract).
\newblock In {\em Proceedings of the Seventeenth Annual ACM Symposium on Theory
  of Computing (STOC'85)}, pages 163--168, 1985.

\end{thebibliography}
\end{document}